\documentclass[submission,copyright,creativecommons]{eptcs}

\usepackage{bnf}
\usepackage{mathpartir}
\usepackage{amsmath}
\usepackage{amssymb}
\usepackage{stmaryrd}
\usepackage{float}
\usepackage{prooftree}
\usepackage[all]{xy}

\usepackage{color}
\usepackage{soul}
\usepackage{hyperref}
\hypersetup{
	colorlinks=true,
	citecolor=blue,
	linkcolor=black,
    urlcolor=black,
	pdftitle={Confluence via strong normalisation in an algebraic lambda-calculus with rewriting},
    pdfauthor={P. Buiras, A. D\'iaz-Caro, M. Jaskelioff},
	hypertexnames=false		% to avoid warnings about the change in the numeration of the pages.
}
\usepackage{breakurl}
\floatstyle{boxed}
\restylefloat{figure}
\restylefloat{table}

\newtheorem{theorem}{Theorem}[section]
\newtheorem{lemma}{Lemma}[section]
\newtheorem{corollary}[theorem]{Corollary}

\newenvironment{proof}[1][Proof]{\begin{trivlist}
\item[\hskip \labelsep {\bfseries #1}]}{\end{trivlist}}
\newenvironment{definition}[1][Definition]{\begin{trivlist}
\item[\hskip \labelsep {\bfseries #1}]}{\end{trivlist}}
\newenvironment{example}[1][Example]{\begin{trivlist}
\item[\hskip \labelsep {\bfseries #1}]}{\end{trivlist}}
\newenvironment{remark}[1][Remark]{\begin{trivlist}
\item[\hskip \labelsep {\bfseries #1}]}{\end{trivlist}}

\newcommand{\qed}{\hfill \mbox{\raggedright \rule{.07in}{.1in}}} 

\newcommand{\llin}{\ensuremath{\lambda_{{\it lin}}}}
\newcommand{\lalg}{\ensuremath{\lambda_{{\it alg}}}}
\newcommand{\additive}{\ensuremath{{\it Additive}}}
\newcommand{\CA}{\ensuremath{\lambda_{\it CA}}}
\newcommand{\ladd}{\ensuremath{\lambda^{\!\textrm{add}}}}
\newcommand{\cf}{\emph{cf.}~}
\newcommand{\ie}{\emph{i.e.}~}
\newcommand{\eg}{\emph{e.g.}~}
\newcommand{\ve}[1]{\ensuremath{\mathbf{#1}}}
\newcommand{\type}{\colon\!}
\newcommand{\pair}[2]{\ensuremath{\langle #1,#2\rangle}}
\newcommand{\thesi}{\vdash_{\scriptstyle F}}
\newcommand{\uno}{\ensuremath{\mathbf{1}}}
\newcommand{\pce}{\ensuremath{\preccurlyeq}}
\newcommand{\sce}{\ensuremath{\succcurlyeq}}

\newcommand{\vdashadd}{\ensuremath{\vdash_{\!\!A}}}
\newcommand{\toA}{\ensuremath{\rightarrow_{{}_{\!A}}}}
\newcommand{\toF}{\ensuremath{\rightarrow_{{}_{\!F}}}}
\newcommand{\toAnormal}{\ensuremath{\fnormalA{}}}
\newcommand{\tonormal}{\ensuremath{\fnormal{}}}
\newcommand{\sqsAstrict}{\ensuremath{\sqsubseteq}}
\newcommand{\sqsA}{\ensuremath{\lesssim}}
\newcommand{\sqsFp}{\ensuremath{\lessapprox}}
\newcommand{\sqsFpstrict}{\ensuremath{\sqsubseteq}_F}
\newcommand{\fnormal}[1]{\ensuremath{{#1}\!\!\downarrow}}
\newcommand{\fnormalA}[1]{\ensuremath{{#1}\!\!\downarrow_{{}_{\!A}}}}
\newcommand{\fnormalFp}[1]{\ensuremath{{#1}\!\!\downarrow_{{}_{\!F}}}}
\newcommand{\Fp}{\ensuremath{F_p}}

\newcommand{\toFpnormal}{\ensuremath{\downarrow_{{}_{\!F}}}}
\newcommand{\AddtoFp}[2]{\ensuremath{[#1]_{\mathsf{#2}}}}
\newcommand{\AddtoFpD}[1]{\ensuremath{\AddtoFp{#1}{D}}}
\newcommand{\TFA}{\ensuremath{T_{\CA}}}
\newcommand{\Tadd}{\ensuremath{T_{\ladd}}}
\newcommand{\TFp}{\ensuremath{T_{\Fp}}}

\title{\texorpdfstring{Confluence via strong normalisation in\\ an algebraic $\lambda$-calculus with rewriting}{Confluence via strong normalisation in an algebraic lambda-calculus with rewriting}}

\author{Pablo Buiras
\institute{Universidad Nacional de Rosario, FCEIA\\
Pellegrini 250\\ S2000BTP Rosario, SF, Argentina}
\email{pablo.buiras@gmail.com}
\and Alejandro D\'iaz-Caro
\institute{Universit\'e de Grenoble, LIG,\\
220 rue de la Chimie\\ 38400 Saint Martin d'H\`eres, France}
\institute{LIPN -- UMR CNRS 7030\\
Institut Galil\'ee - Universit\'e Paris-Nord\\
99, avenue Jean-Baptiste Cl\'ement\\
93430 Villetaneuse, France}
\email{alejandro@diaz-caro.info}
\and Mauro Jaskelioff
\institute{Universidad Nacional de Rosario, FCEIA\\
Pellegrini 250\\ S2000BTP Rosario, SF, Argentina}
\institute{CIFASIS\\
27 de Febrero 210 bis\\ S2000EZP Rosario, SF, Argentina}
\email{mauro@fceia.unr.edu.ar}
}

\begin{document}
\maketitle
\begin{abstract}
  The linear-algebraic $\lambda$-calculus and the algebraic
  $\lambda$-calculus are untyped $\lambda$-calculi extended with
  arbitrary linear combinations of terms. The former presents the
  axioms of linear algebra in the form of a rewrite system, while the
  latter uses equalities. When given by rewrites, algebraic
  $\lambda$-calculi are not confluent unless further restrictions are added. 
  We provide a type system for the linear-algebraic
  $\lambda$-calculus enforcing strong normalisation, which gives back
  confluence. 
  The type system allows an abstract interpretation in System F.
\end{abstract}

\section{Introduction}\label{sec:intro}
Two algebraic versions of $\lambda$-calculus arose independently in
different contexts: the linear-algebraic $\lambda$-calculus
(\llin)~\cite{ArrighiDowekRTA08} and the algebraic $\lambda$-calculus
(\lalg)~\cite{VauxMSCS09}. The former was first introduced as a
candidate $\lambda$-calculus for quantum computation; a linear
combination of terms reflects the phenomenon of superposition, \ie
the capacity for a quantum system to be in two or more states at the
same time. The latter was introduced in the context of linear logic,
as a fragment of the differential
$\lambda$-calculus~\cite{EhrhardRegnierTCS03}, an extension to
$\lambda$-calculus with a {\em differential operator} making the
resource-aware behaviour explicit. This extension produces a calculus
where superposition of terms may happen. Then \lalg\ can be seen as
{\em a differential $\lambda$-calculus without the differential
  operator}. In recent years, there has been growing research interest in
these two calculi and their variants, as they could provide an
explicit link between linear logic and linear
algebra~\cite{ArrighiDiazcaroQPL09,ArrighiDiazcaroValironDCM11,DiazcaroPerdrixTassonValironHOR10,DiazcaroPetit10,EhrhardMSCS03,EhrhardMSCS05,EhrhardLICS10,EhrhardRegnierTCS03,PaganiRonchiFOSSACS10,PaganiTranquilliAPLAS09,TassonTLCA09,VauxRTA07}.

The two languages, \llin\ and \lalg, are rather similar: they both
merge the untyped $\lambda$-calculus --higher-order computation in its
simplest and most general form-- with linear algebraic constructions
--sums and scalars subject to the axioms of vector spaces. In both
languages, functions which are linear combinations of terms are
interpreted pointwise: $(\alpha.\ve f + \beta.\ve g)~x = \alpha.(\ve
f~x)+\beta.(\ve g~x)$, where ``$.$'' is the external product. However,
they differ in their treatment of arguments. In \llin, the reduction
strategy is call-by-value (or strictly speaking, call-by-variables or
abstractions) and, in order to deal with the algebraic structure, any
function is considered to be a linear map: $\ve f~(\alpha.x +
\beta.y)$ reduces to $\alpha.(\ve f~x) + \beta.(\ve f~y)$, reflecting
the fact that any quantum evolution is a linear map. On the other
hand, \lalg\ has a call-by-name strategy: $(\lambda x.\,\ve
t)~\ve r$ reduces to $\ve t[\ve r/x]$, with no restrictions on $\ve r$
. As a consequence, the reductions are different as illustrated
by the following example.  In \llin, $(\lambda x.\,x~x)~(\alpha.y +
\beta.z)$ reduces to $\alpha.(y~y) + \beta.(z~z)$ while in \lalg,
$(\lambda x.\,x~x)~(\alpha.y + \beta.z)$ reduces to $(\alpha.y +
\beta.z)~(\alpha.y + \beta.z) = \alpha^2 .(y~y) +
\alpha\times\beta.(y~z) + \beta\times\alpha.(z~y) + \beta^2.(z~z)$.
Nevertheless, they can simulate each other by means of an extension of
the well-known CPS transform that maps call-by-value to call-by-name
and vice versa~\cite{DiazcaroPerdrixTassonValironHOR10}.

Another more fundamental difference between them is the way the
algebraic part of the calculus is treated. In \llin, the algebraic
structure is captured by a rewrite system, whereas in \lalg\ terms are
identified up to algebraic equivalence. Thus, while $\ve t+\ve t$
reduces to $2.\ve t$ in \llin, they are regarded as the same term in
\lalg.  Using a rewrite system allows \llin\ to expose the algebraic
structure in its canonical form, but it is not without some confluence
issues. Consider the term $Y_{\ve b}=(\lambda x.\,\ve b+x~x)~(\lambda
x.\,\ve b+x~x)$. Then $Y_{\ve b}$ reduces to $\ve b+Y_{\ve b}$, so the
term $Y_{\ve b}+Y_{\ve b}$ in \llin\ reduces to $2.Y_{\ve b}$ but also
to $\ve b+Y_{\ve b}+Y_{\ve b}$ and thus to $\ve b+2.Y_{\ve b}$. Note
that $2.Y_{\ve b}$ can only produce an even number of $\ve b$'s
whereas $\ve b+2.Y_{\ve b}$ will only produce an odd number of $\ve
b$'s, breaking confluence. In \lalg{}, on the other hand, $\ve
b+2.Y_{\ve b}=\ve b+Y_{\ve b}+Y_{\ve b}$, solving the problem. The
canonical solution in $\llin$ is to disallow diverging terms. In
\cite{DiazcaroPerdrixTassonValironHOR10} it is assumed that confluence
can be proved in some unspecified way; then, sets of confluent terms
are defined and used in the hypotheses of several theorems that
require confluence. In the original \llin\
paper~\cite{ArrighiDowekRTA08}, certain restrictions are introduced to
the rewrite system, such as having $\alpha.\ve t+\beta.\ve t$ reduce
to $(\alpha+\beta).\ve t$ only when $\ve t$ is in closed normal form.
The rewrite system has been proved locally
confluent~\cite{ArrighiDiazcaroValironDCM11}, so by ensuring strong
normalisation we obtain confluence~\cite{Terese03}. This approach has
been followed in other
works~\cite{ArrighiDiazcaroQPL09,ArrighiDiazcaroValironDCM11,DiazcaroPetit10}
which discuss similar type systems with strong normalisation. While
these type systems give us some information about the terms, they also
impose some undesirable restrictions:
\begin{itemize}
\item In \cite{ArrighiDiazcaroQPL09} two type systems are presented: a
  straightforward extension of System F, which only allows typing $\ve
  t+\ve r$ when both $\ve t$ and $\ve r$ have the same type, and a
  type system with scalars in the types, which keep track of the
  scalars in the terms, but is unable to lift the previous
  restriction.
\item In \cite{DiazcaroPetit10} a type system solving the previous
  issue that can be interpreted in System F is introduced. However,
  it only considers the additive fragment of \llin: scalars are removed
  from the calculus, considerably simplifying the rewrite system.
 \item In \cite{ArrighiDiazcaroValironDCM11} 
   a combination of the two previous approaches is set up: a type system 
   where the types can be weighted and added together is devised. While this is 
   a novel approach, the introduction of type-level scalars makes it 
   difficult to relate it to System F or any other well-known theory.
\end{itemize}
In this paper, we propose an algebraic $\lambda$-calculus featuring
term-rewriting semantics and a type system strong enough to prove
confluence, while remaining expressive and retaining the
interpretation in System F from previous works. In addition, the type
system provides us with lower bounds for the scalars involved in the
terms.

\paragraph{Outline.}
In section~\ref{sec:calculus} the typed version of \llin, called \CA,
is presented.  Section~\ref{sec:properties} is devoted to proving that
the system possesses some basic properties, namely subject reduction
and strong normalisation, which entails the confluence of the
calculus.  Section~\ref{sec:abstractinterpretation} shows an abstract
interpretation of \CA\ into \additive, the additive fragment of \llin.
Finally, section~\ref{sec:conclusion} concludes.

\section{The Calculus}\label{sec:calculus}
We introduce the calculus \CA, which extends explicit System
F~\cite{ReynoldsPS74} with linear combinations of
$\lambda$-terms. Table~\ref{tab:language} shows the abstract syntax of
types and terms of the calculus, where the terms are based on those of
\llin~\cite{ArrighiDowekRTA08}.
Our choice of explicit System F instead of a Curry style
presentation~\cite{ArrighiDiazcaroQPL09,DiazcaroPetit10} stems from
the fact that, as shown in~\cite{ArrighiDiazcaroValironDCM11},
the ``factorisation'' reduction rules (\cf Group F in
Table~\ref{tab:reduction}) in a Curry style setting introduce some
imprecisions.

\begin{table}[t]
\caption{Types and Terms of \CA}
\label{tab:language}
\vspace{0.5em}
\hspace{0.5em}
\begin{minipage}[l]{0.4\linewidth}
\textit{Types:}
\begin{grammar}
  [(colon){::$=$}]
  [(semicolon){$|$}]
  [(period){.}]
  [(nonterminal){}{}]
$\mathsf{<T>}$ : $\mathsf{<U>}$ ; $\mathsf{<T>} + \mathsf{<T>}$ ; $\bar{0}$ \\
$\mathsf{<U>}$ : $X$ ; $\mathsf{<U>} \to \mathsf{<T>}$ ;
$\forall X . \, \mathsf{<U>}$  
\end{grammar}
\end{minipage}
\hspace{0.1em}
\begin{minipage}[r]{0.5\linewidth}
\textit{Terms:}
\begin{grammar}
  [(colon){::$=$}]
  [(semicolon){$|$}]
  [(period){.}]
  [(nonterminal){\bf}{}]
<t> : <b> ; <t> <t> ; <t>@$\mathsf{U}$ ; <0> ; $\alpha.$<t> ; <t> $+$ <t> \\
<b> : $x$ ; $\lambda x  \colon\mathsf{U} .$ <t> ;
$\Lambda X.$ <t> 
\end{grammar}
\end{minipage}

\vspace{0.5em}
\end{table}

We use the convention that abstraction binds as far to the right as
possible and that application binds more strongly than sums and
scalar multiplication.  However, we will freely add parentheses whenever confusion
might arise. Metavariables $\ve{t}, \ve{r}$, $\ve{s}$, $\ve u$, and
$\ve v$ will range over terms.

Terms known as \emph{basis} terms (nonterminal $\ve{b}$ in
Table~\ref{tab:language}) are the only ones that can substitute a variable
in a $\beta$-reduction step.
This ``call-by-$\ve b$''\footnote{
The set of terms in $\ve b$ is not the set of values of $\CA$ (see
Section~\ref{sec:SN}), so technically it
is not ``call-by-value''.
}
strategy plays an important role when interacting with the linearity
from linear-algebra, \eg the term $(\lambda x:U.\,x~x)~(y+z)$ may
reduce to $(y+z)~(y+z)$ and this to $y~y+y~z+z~y+z~z$ in a call-by-name
setting, however if we decide that abstractions should behave as
linear maps, then this call-by-$\ve b$ strategy can be used and the
previous term will reduce to $(\lambda x:U.\,x~x)~y+(\lambda x:U.\,x~x)~z$
and then to $y~y+z~z$. 

For the same reason, we also make a distinction between \emph{unit}
types (nonterminal $\mathsf{U}$ in Table~\ref{tab:language}) and general
types. Unit types cannot include sums of types except in the codomain
of a function type, and they contain all types of System F. General
types are either sums of unit types or the special type
$\bar{0}$. Basis terms can only be assigned unit types. Scalars
(denoted by greek letters) are nonnegative real numbers.  There are no
scalars at the type level, but we introduce the following notation:
for an integer $n\geq 0$, we will write $n.T$ for the type $T + T +
\cdots + T$ ($n$ times), considering $0.T=\bar{0}$. We may also use
the summation symbol $\sum_{i=1}^n T_i$, with $\sum_{i=1}^0
T_i=\bar{0}$. Metavariables $T, R$, and $S$ will range over general
types and $U,V$, and $W$ over unit types.

Table~\ref{tab:reduction} defines the 
term-rewriting system (TRS)
for \CA, which
 consists of directed versions of the
vector-space axioms and $\beta$-reduction for both kinds of
abstractions. All reductions are performed modulo associativity and
commutativity of the $+$ operator.  It is essentially the TRS of
\llin\ \cite{ArrighiDowekRTA08}, 
with an extra
type-application rule. As usual,
$\to^*$ denotes the reflexive transitive closure of the reduction
relation $\to$.

\begin{table}[t]
\caption{One-step Reduction Relation $\to$}
\label{tab:reduction}
$$
\begin{array}{l@{\hspace{1.5em}}l@{\hspace{1.5em}}l}
\textit{Group E:} & \textit{Group F:} & \textit{Group A:} \\

\ve{u} + \ve{0} \to \ve{u} & \alpha.\ve{u} + \beta.\ve{u} \to (\alpha + \beta).\ve{u}  & (\ve{u}+\ve{v})\, \ve{w} \to \ve{u}\, \ve{w} + \ve{v}\, \ve{w} \\

0.\ve{u} \to \ve{0} & \alpha.\ve{u} + \ve{u} \to (\alpha + 1).\ve{u} & \ve{w}\, (\ve{u}+\ve{v}) \to \ve{w}\, \ve{u} + \ve{w}\, \ve{v} \\
1.\ve{u} \to \ve{u} & \ve{u}+\ve{u} \to 2.\ve{u} & (\alpha.\ve{u})\, \ve{v} \to \alpha.(\ve{u}\, \ve{v}) \\

\alpha.\ve{0} \to \ve{0} & \textit{$\beta$-reduction:} & \ve{v}\, (\alpha.\ve{u}) \to \alpha.(\ve{v}\, \ve{u}) \\

\alpha.(\beta.\ve{u}) \to (\alpha\times\beta).\ve{u} & (\lambda x : U.\, \ve{t})\, \ve{b} \to \ve{t}[\ve{b}/x] & \ve{0}\, \ve{u} \to \ve{0} \\

\alpha.(\ve{u}+\ve{v})\to \alpha.\ve{u} + \alpha.\ve{v} & (\Lambda X . \ve{t}) @ U \to \ve{t}[U/X] & \ve{u}\, \ve{0} \to \ve{0}
\end{array}
$$

\begin{mathpar}
\inferrule{\ve{t}\to\ve{t'}}{\ve{t}+\ve{r}\to\ve{t'}+\ve{r}}
\and
\inferrule{\ve{t}\to\ve{t'}}{\alpha . \ve{t}\to\alpha.\ve{t'}}
\and
\inferrule{\ve{t}\to\ve{t'}}{\ve{t}\,\ve{r}\to\ve{t'}\,\ve{r}}
\and
\inferrule{\ve{r}\to\ve{r'}}{\ve{t}\,\ve{r}\to\ve{t}\,\ve{r'}}\\

\inferrule{\ve{t}\to\ve{t'}}{\ve{t}@T\to\ve{t'}@T} 
\and
\inferrule{\ve{t}\to\ve{t'}}{\lambda x : U.\,\ve{t}\to\lambda x : U.\,\ve{t'}}
\and
\inferrule{\ve{t}\to\ve{t'}}{\Lambda X.\ve{t}\to\Lambda X.\ve{t'}}

\end{mathpar}
\end{table}

Substitution for term and type variables (written $\ve{t}[\ve{b}/x]$
and $\ve{t}[U/X]$, respectively) are defined in the usual way to avoid
variable capture.  Substitution behaves like a linear operator when
acting on linear combinations, \eg $(\alpha.\ve{t} +
\beta.\ve{r})[\ve{b}/x] = \alpha.\ve{t}[\ve{b}/x] +
\beta.\ve{r}[\ve{b}/x]$.

Table~\ref{tab:tyrules} defines the notion of type equivalence and shows
the typing rules for the system. The typing judgement
$\Gamma\vdash\ve{t} : T$ means that the term \ve{t} can be assigned
type $T$ in the context $\Gamma$, with the usual definition of typing
context from System F. As a consequence of the design decision of only
allowing basis terms to substitute variables in a $\beta$-reduction,
typing contexts bind term variables to unit types.

\begin{table}[t]
\caption{\CA\ Type Equivalence and Typing Rules}
\label{tab:tyrules}
\textit{Type Equivalence:}$\quad$Equivalence is the least congruence $\equiv$ s.t.
\[
\begin{array}{lll}
T + \bar{0} \equiv T, \qquad & T + R \equiv R + T, \qquad & T + (R + S) \equiv (T + R) + S
\end{array}
\]

\textit{Typing rules:}
\begin{mathpar}
\inferrule{ }{\Gamma, x : U \vdash x : U} \; \textsc{ax}
\and
\inferrule{ }{\Gamma\vdash \mathbf{0} : \bar{0}} \;
\textsc{ax}_{\bar{0}} \\

\inferrule{\Gamma \vdash \mathbf{t} : \sum_{i=1}^\alpha (U \to T_i) \\
  \Gamma \vdash \mathbf{r} : \beta.U}{\Gamma \vdash \mathbf{t}\,
  \mathbf{r} : \sum_{i=1}^\alpha (\beta.T_i)} \; \to_\textsc{E}
\and
\inferrule{\Gamma, x : U \vdash \mathbf{t} : T}{\Gamma \vdash \lambda
  x : U .\, \mathbf{t} : U \to T} \; \to_\textsc{I} \\

\inferrule{\Gamma\vdash\mathbf{t} : \forall X.U}{\Gamma\vdash
  \mathbf{t} @ V: U[V/X]} \; \forall_E
\and
\inferrule{\Gamma\vdash \mathbf{t} : U \and X \notin
\mbox{FV}(\Gamma)}{\Gamma\vdash \Lambda X . \mathbf{t} :
  \forall X.U} \; \forall_I \\

\inferrule{\Gamma\vdash \mathbf{t} : T \\ \Gamma\vdash
  \mathbf{r} : R}{\Gamma\vdash \mathbf{t} + \mathbf{r} : T+R} \; \textsc{+I}
\and
\inferrule{\Gamma\vdash \mathbf{t} : T}{\Gamma\vdash
  \alpha.\mathbf{t}
  : \lfloor\alpha\rfloor.T} \; \textsc{sI} \\

\inferrule{\Gamma\vdash \mathbf{t} : T \and T \equiv R}{\Gamma\vdash
  \mathbf{t} : R} \; \textsc{Eq}
\end{mathpar}
\end{table}

Using standard arrow elimination instead of rule
$\to_E$ would restrict the calculus, since it would force $\ve t$ to
be sum of arrows of the same type $U\to T$. The same would happen
with the argument type $U$: for the term $(\ve t_1+\ve t_2)~(\ve
r_1+\ve r_2)$ to be well-typed, $\ve t_1$ and $\ve t_2$ would need to
have the same type, and also $\ve r_1$ and $\ve r_2$.

In the rule $\to_E$ presented in Table~\ref{tab:tyrules} we relax this
restriction and we allow to have different $T$'s. Continuing with the
example, this allows $\ve t_1$ and $\ve t_2$ to have different types,
provided that they are arrows with the same source type $U$.

\begin{example}\label{ex:flecha-elim}
  Let $\Gamma\vdash\ve{b}_1\type U$, $\Gamma\vdash\ve{b}_2\type U$, $\Gamma\vdash\lambda x.\,\ve{t}\type U\to T$ and $\Gamma\vdash\lambda y.\,\ve{r}\type U\to R$. Then
  $$\prooftree
	\Gamma\vdash(\lambda x.\,\ve{t})+(\lambda y.\,\ve{r})\type(U\to T)+(U\to R)
	\quad
	\Gamma\vdash\ve{b}_1+\ve{b}_2\type U+U
  \justifies\Gamma\vdash((\lambda x.\,\ve{t})+(\lambda y.\,\ve{r}))~(\ve{b}_1+\ve{b}_2)\type T+T+R+R
  \using\to_E
  \endprooftree$$
  Notice that
  $((\lambda x.\,\ve{t})+(\lambda y.\,\ve{r}))~(\ve{b}_1 +\ve{b}_2 )\to^*
  \underbrace{(\lambda x.\,\ve{t})~\ve{b}_1}_{T} +
  \underbrace{(\lambda x.\,\ve{t})~\ve{b}_2}_{T} +
  \underbrace{(\lambda y.\,\ve{r})~\ve{b}_1}_{R} +
  \underbrace{(\lambda y.\,\ve{r})~\ve{b}_2}_{R} $
\end{example}

On the other hand, allowing different $U$'s is sightly more complex:
on account of the distributive rules (Group A) it is required that all
the arrows in the first addend start with a type which has to be the
type of all the addends in the second term. For example, if the given
term is $(\ve t+\ve r)~(\ve b_1+\ve b_2)$, the terms $\ve t$ and $\ve
r$ have to be able to receive both $\ve b_1$ and $\ve b_2$ as
arguments. This could be done by taking advantage of polymorphism, but
the arrow-elimination rule would become much more complex since it
would have to do both arrow-elimination and forall-elimination at the
same time. Although this approach has been shown to be
viable~\cite{ArrighiDiazcaroValironDCM11}, we delay the modification of
the rule to future work, and keep the simpler but more restricted
version, which is enough for the aims of the present paper.

The main novelty of the calculus is its treatment of scalars (rule
\textsc{sI}). In order to avoid having scalars at the type level, when
typing $\alpha.\ve{t}$ we take the floor of the term-level scalar
$\alpha$ and assign the type $\lfloor\alpha\rfloor.T$ to the term,
which is a sum of $T$s.  The intuitive
interpretation is that a type $n.T$ provides a lower bound for the
``amount'' of $\ve{t}:T$ in the term.

The rest of the rules are straightforward. The $\forall_E$
and $\forall_I$ rules enforce the restriction that only unit types can
participate in type abstraction and type application.

\section{Properties}\label{sec:properties}
\subsection{Subject Reduction with Imprecise Types}\label{sec:SR}

A basic soundness property in a typed calculus is the guarantee that
types will be preserved by reduction. However, in $\CA$ types are
imprecise about the ``amount'' of each type in a term.  For example,
let $\Gamma\vdash\ve{t} : T$ and consider the term $\ve s=
(0.9).\ve{t} + (1.1).\ve{t}$. We see that $\Gamma\vdash \ve{s} : T$
and $\ve{s} \to^* 2.\ve{t}$, but $\Gamma\vdash 2.\ve{t} : T+T$. In
this example a term with type $T$ reduces to a term with type $T+T$,
proving that strict subject reduction does not hold for $\CA$.
Nevertheless, we prove a similar property: as reduction progresses,
types are either preserved or \emph{strengthened}, \ie they become
more precise according to the relation $\pce$ (\cf
Table~\ref{tab:pcedef}). This entails that the derived type for a
term is a lower-bound (with respect to $\pce$) for the actual type of
the reduced term.

\begin{table}[t]
 \centering
  \begin{tabular}{c@{\hspace{0.5cm}}c@{\hspace{0.5cm}}c}
  &&\\
  \prooftree{\alpha \leq\beta}
  \justifies{\alpha.T\pce\beta.T}
  \using \textsc{\small Sub-Wk}
  \endprooftree
  &
  \prooftree{T\equiv R}
  \justifies{T\pce R}
  \using\textsc{\small Sub-Eq}
  \endprooftree
  &
  \prooftree{T\pce S \quad S\pce R}
  \justifies{T\pce R}
  \using\textsc{\small Sub-Tr}
  \endprooftree\\
  &&\\
  \prooftree{T_1\pce T_2 \quad S_1\pce S_2}
  \justifies{T_1+S_1\pce T_2+S_2}
  \using\textsc{\small Sub-Ctxt}_1
  \endprooftree
  &
  \prooftree{U_2\pce U_1 \quad T_1\pce T_2}
  \justifies{U_1\to T_1 \pce U_2 \to T_2}
  \using\textsc{\small Sub-Ctxt}_2
  \endprooftree
  &
  \prooftree{T\pce R}
  \justifies{\forall X.T\pce \forall X.R}
  \using\textsc{\small Sub-Ctxt}_3
  \endprooftree\\
  &&\\
  \end{tabular}
  \caption{Inductive definition of the relation $\pce$, where $\leq$ is the ordering of real numbers}
  \label{tab:pcedef}
\end{table}

\begin{theorem}[Subject Reduction up to $\pce$]\label{thm:SR} 
  For any terms $\ve{t}$ and $\ve{t'}$, context $\Gamma$ and type $T$,
  if $\ve{t} \to \ve{t'}$ and $\Gamma\vdash \ve{t} : T$ then there
  exists some type $R$ such that $\Gamma\vdash \ve{t'} : R$ and $T\pce
  R$, where the relation $\pce$ is inductively defined in
  Table~\ref{tab:pcedef}.
\end{theorem}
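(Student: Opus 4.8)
The plan is to prove the statement by induction on the derivation of $\ve{t}\to\ve{t'}$, i.e.\ by a case analysis on the rewrite rule contracted at the root of the step together with the congruence rules of Table~\ref{tab:reduction}. Before attacking the cases I would set up a few auxiliary facts. (i)~The relation $\pce$ is a preorder that contains $\equiv$ and is a precongruence for every type constructor: reflexivity comes from \textsc{Sub-Eq}, transitivity from \textsc{Sub-Tr}, compatibility from the \textsc{Sub-Ctxt} rules, which by iteration also give $T\pce R\Rightarrow n.T\pce n.R$. (ii)~The elementary numerical facts $\lfloor\alpha\rfloor+\lfloor\beta\rfloor\le\lfloor\alpha+\beta\rfloor$, $\lfloor\alpha\rfloor\cdot\lfloor\beta\rfloor\le\lfloor\alpha\beta\rfloor$, $\lfloor\alpha+1\rfloor=\lfloor\alpha\rfloor+1$ and $\lfloor n\rfloor=n$ for $n\in\mathbb{N}$. (iii)~A Generation (inversion) Lemma reading each rule of Table~\ref{tab:tyrules} backwards modulo $\equiv$; the non-routine clauses are: $\Gamma\vdash\ve{t}\,\ve{r}:T$ implies $\Gamma\vdash\ve{t}:\sum_{i=1}^{\alpha}(U\to T_i)$, $\Gamma\vdash\ve{r}:\beta.U$ and $T\equiv\sum_{i=1}^{\alpha}\beta.T_i$; $\Gamma\vdash\alpha.\ve{t}:T$ implies $\Gamma\vdash\ve{t}:S$ and $T\equiv\lfloor\alpha\rfloor.S$; $\Gamma\vdash\ve{t}+\ve{r}:T$ implies $\Gamma\vdash\ve{t}:T_1$, $\Gamma\vdash\ve{r}:T_2$ and $T\equiv T_1+T_2$. (iv)~Unicity of types up to $\equiv$. (v)~The two Substitution Lemmas --- $\Gamma,x:U\vdash\ve{t}:T$ and $\Gamma\vdash\ve{b}:U$ give $\Gamma\vdash\ve{t}[\ve{b}/x]:T$, and $\Gamma\vdash\ve{t}:U$ with $X\notin\mathrm{FV}(\Gamma)$ gives $\Gamma\vdash\ve{t}[V/X]:U[V/X]$ --- both by induction on the typing derivation, the first one using the inversion clause for variables.

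With these available, the congruence cases are routine: inverting the typing rule for the compound term, applying the induction hypothesis to the reduced immediate subterm and recombining with the precongruence of $\pce$; for a subterm sitting under a constructor governed by $\to_E$ or \textsc{sI} one additionally needs a $\pce$-inversion fact telling us the strengthened type still has the shape the rule requires (a sum of arrows with a common domain, resp.\ a multiple of a single type). Among the root cases, the structural rules of Group~E other than scalar composition, $\alpha.(\ve{u}+\ve{v})\to\alpha.\ve{u}+\alpha.\ve{v}$, both $\ve{0}$-application rules, all of distributivity in Group~A, and both $\beta$-rules preserve the type exactly up to $\equiv$, so $T\pce R$ holds by \textsc{Sub-Eq}; the $\beta$-cases use the Substitution Lemmas and the Group~A cases use the Generation Lemma to split $\sum_{i=1}^{\alpha}(U\to T_i)$ and re-assemble it after the application has distributed. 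Genuine strengthening arises only for $\alpha.(\beta.\ve{u})\to(\alpha\times\beta).\ve{u}$ and $\alpha.\ve{u}+\beta.\ve{u}\to(\alpha+\beta).\ve{u}$: inversion (plus unicity, to see that both occurrences of $\ve{u}$ carry the same type $S$) gives redex type $(\lfloor\alpha\rfloor\cdot\lfloor\beta\rfloor).S$, resp.\ $(\lfloor\alpha\rfloor+\lfloor\beta\rfloor).S$, while the contractum receives $\lfloor\alpha\beta\rfloor.S$, resp.\ $\lfloor\alpha+\beta\rfloor.S$, and the numerical facts together with \textsc{Sub-Wk} and \textsc{Sub-Tr} yield $T\pce R$; the remaining Group~F rules $\alpha.\ve{u}+\ve{u}\to(\alpha+1).\ve{u}$ and $\ve{u}+\ve{u}\to 2.\ve{u}$ are already exact because $\lfloor\alpha+1\rfloor=\lfloor\alpha\rfloor+1$ and $\lfloor 2\rfloor=2$.

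The step I expect to be the main obstacle is the bookkeeping around rule $\to_E$. Because $\to_E$ bundles a variable number of arrow summands sharing a domain together with a multiplicity $\beta$ extracted from the argument, the Generation Lemma for applications must be stated with precisely the right slack modulo $\equiv$, and the Group~A cases ($(\ve{u}+\ve{v})\,\ve{w}$, $\ve{w}\,(\ve{u}+\ve{v})$, $(\alpha.\ve{u})\,\ve{v}$, $\ve{v}\,(\alpha.\ve{u})$) each force one to re-split and re-assemble such sums while checking that the scalar counts match up. The accompanying $\pce$-inversion fact --- that $\pce$-strengthening of a sum of $U$-domained arrows is again such a sum --- is what makes the congruence case $\ve{t}\,\ve{r}\to\ve{t'}\,\ve{r}$ go through, and it is the most delicate lemma in the development. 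The conceptual heart, running through every scalar case, is orienting the floor correctly: rule \textsc{sI} \emph{under}-approximates the term-level scalar, so reduction can only move the type \emph{up} in $\pce$; arranging that every inequality points that way --- rather than needing, say, $\lfloor\alpha\beta\rfloor\le\lfloor\alpha\rfloor\cdot\lfloor\beta\rfloor$, which is false --- is exactly what keeps \textsc{Sub-Wk} applicable and the whole induction consistent.
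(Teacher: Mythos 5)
Your proposal follows essentially the same route as the paper: induction on the derivation of the reduction step, supported by generation and substitution lemmas, with genuine $\pce$-strengthening isolated to the scalar-merging rules via $\lfloor\alpha\rfloor+\lfloor\beta\rfloor\le\lfloor\alpha+\beta\rfloor$ (resp.\ $\lfloor\alpha\rfloor\cdot\lfloor\beta\rfloor\le\lfloor\alpha\beta\rfloor$), \textsc{Sub-Wk}, and unicity of types up to $\equiv$ --- exactly the argument the paper gives for its illustrative case $\alpha.\ve{u}+\beta.\ve{u}\to(\alpha+\beta).\ve{u}$. The extra bookkeeping you flag (the $\pce$-inversion needed to re-apply $\to_E$ in the congruence case $\ve{t}\,\ve{r}\to\ve{t'}\,\ve{r}$) is a genuine subtlety that the paper's sketch simply elides, so identifying it is a point in your favour rather than a divergence.
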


Intuitively, $T\pce R$ ($R$ is at least as precise as $T$) means that
there are more summands of the same type in $R$ than in $T$, \eg
$A\pce A+A$ for a fixed type $A$. Note that $\pce$ is not the trivial
order relation: although $T\pce T + R$ for any $R$ (because $T\equiv
T+0.R\pce T + 1.R\equiv T + R$), type $T$ cannot disappear from the
sum; if $T\pce S$, then $T$ will always appear at least once in $S$
(and possibly more than once).

The proof of this theorem requires several preliminary lemmas. We give
the most important of them and some details about the proof of the
theorem.

\renewcommand{\theenumi}{\arabic{enumi}}
\renewcommand{\labelenumi}{\theenumi.}

\begin{lemma}[Generation lemmas] \label{GenLemmas} Let $T$ be a type
  and $\Gamma$ a typing context.

  \begin{enumerate}
  \item
    \label{GenApp} For arbitrary terms $\ve u$ and $\ve v$, if $\Gamma\vdash
    {\ve{u}\, \ve{v}} : {T}$, then there exist natural numbers
    $\alpha, \beta$, and types $U\in \mathsf{U}, T_1, \ldots,
    T_\alpha \in \mathsf{T}$, such that
  $\Gamma\vdash  {\ve{u}} : {\sum_{i=1}^\alpha (U \rightarrow T_i)}$ and 
  $\Gamma\vdash {\ve{v}} : {\beta.U}$ with
  $\sum_{i=1}^\alpha (\beta.T_i) \equiv T$.
\item \label{GenAbs} For any term $\ve t$ and unit type $U$, if
  $\Gamma\vdash {\lambda x : U .\, \ve{t}} : {T}$, then there exists a
  type $R$ such that $\Gamma, x : U\vdash {\ve{t}} : {R}$ and $U\to R
  \equiv T$.

\item \label{GenSum} For any terms $\ve u$ and $\ve v$, if
  $\Gamma\vdash {\ve{u}+\ve{v}} : {T}$, then there exist types $R$ and
  $S$ such that
  $\Gamma\vdash {\ve{u}} : {R}$ and
  $\Gamma\vdash {\ve{v}} : {S}$, with
  $R+S \equiv T$.

\item \label{GenScal} For any term $\ve u$ and nonnegative real number
  $\alpha$, if $\Gamma\vdash {\alpha. \ve{u}} : {T}$, then there
  exists a type $R$ such that $\Gamma\vdash {\ve{u}} : {R}$ and
  $\lfloor\alpha\rfloor . R \equiv T$.

\item \label{GenTyAbs} For any term $\ve t$, if $\Gamma\vdash {\Lambda
    X. \ve{t}} : {T}$, then there exists a type $R$ such that
  $\Gamma\vdash {\ve{t}} : {R}$ and $\forall X.R\equiv T$ with
  $X\notin\text{FV}(\Gamma)$.

\item \label{GenTyApp} For any term $\ve t$ and unit type $U$, if
  $\Gamma\vdash {\ve{t}@U} : {T}$, then there exists a type $V$ such
  that $\Gamma\vdash {\ve{t}} : {\forall X.V}$ and $V[U/X]\equiv
  T$.\qed
\end{enumerate}
\end{lemma}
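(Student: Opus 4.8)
The plan is to prove the Generation Lemmas by induction on the typing derivation of the given term, in each case analysing which rule was applied last. The key observation is that the typing rules are almost syntax-directed, except for the rules \textsc{Eq} (type equivalence) and \textsc{sI} in the scalar case, which may be applied on top of the "structural" rule for each term former. So for each of the six items I would argue that the derivation consists of exactly one application of the appropriate structural rule (\textsc{ax} or $\to_\textsc{I}$ or \textsc{+I} or \textsc{sI} or $\forall_I$ or $\forall_E$ respectively) possibly followed by a (finite) sequence of \textsc{Eq} steps, and then collect those \textsc{Eq} steps into a single equivalence using transitivity and congruence of $\equiv$.

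Concretely, for item~\ref{GenApp}, the last rule producing $\Gamma\vdash\ve u\,\ve v:T$ is either $\to_\textsc{E}$ or \textsc{Eq}. I would formalise this with an auxiliary claim: if $\Gamma\vdash\ve u\,\ve v:T$ then there are $\alpha,\beta$ and $U,T_1,\dots,T_\alpha$ with $\Gamma\vdash\ve u:\sum_{i=1}^\alpha(U\to T_i)$, $\Gamma\vdash\ve v:\beta.U$, and $\sum_{i=1}^\alpha(\beta.T_i)\equiv T$; the proof is induction on the derivation, base case $\to_\textsc{E}$ (which gives $\equiv$ by reflexivity), inductive case \textsc{Eq} (use the inductive hypothesis for the premise $T'$, then compose $\sum(\beta.T_i)\equiv T'\equiv T$ by \textsc{Sub-Tr}-style transitivity of $\equiv$). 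Items~\ref{GenAbs}, \ref{GenSum}, \ref{GenTyAbs}, \ref{GenTyApp} are entirely analogous, using $\to_\textsc{I}$, \textsc{+I}, $\forall_I$, $\forall_E$ as the respective base cases, and carrying the side condition $X\notin\mathrm{FV}(\Gamma)$ through in the \textsc{Eq} step (it does not touch the context). Item~\ref{GenScal} is the one place where two non-\textsc{Eq} rules can introduce the term former $\alpha.\ve u$: here the last rule is \textsc{sI} with premise $\Gamma\vdash\ve u:R$ and conclusion type $\lfloor\alpha\rfloor.R$, or it is \textsc{Eq}; in the \textsc{Eq} case we invoke the inductive hypothesis and glue the equivalences as before.

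The main obstacle, and the step I would be most careful about, is handling the interaction of \textsc{Eq} with the structural shapes — in particular making sure that a chain $\to_\textsc{E}$ followed by several \textsc{Eq} steps really does collapse to "one $\to_\textsc{E}$ premise pair plus one final $\equiv$". This is clean precisely because \textsc{Eq} only rewrites the \emph{type}, never the context or the term, so it cannot change which premises were used; the inductive-on-the-derivation formulation makes this rigorous without case-splitting on the structure of $\equiv$-derivations. A secondary point worth stating explicitly is that $\equiv$ is reflexive, symmetric and transitive (being a congruence generated by an equivalence relation), so that in each base case $\sum(\beta.T_i)\equiv T$, etc., holds by reflexivity, and in each \textsc{Eq} case the composition is by transitivity. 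No deeper structural analysis of types is needed, which is why these lemmas, though numerous, are routine once the \textsc{Eq}-absorption pattern is set up.
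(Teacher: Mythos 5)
Your proposal is correct and coincides with the paper's (implicit) argument: the paper states these generation lemmas without a written proof, treating them as the routine induction on typing derivations in which the only non-syntax-directed rule, \textsc{Eq}, is absorbed into the final $\equiv$ by transitivity --- exactly the scheme you set up. (One trivial slip: in your list of structural rules the one for item 1 should be $\to_E$ rather than \textsc{ax}, as you in fact use correctly later in the same paragraph.)
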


\noindent The following lemma is standard in proofs of subject reduction for
System F-like systems~\cite{Krivine90,Barendregt92}. It ensures that
well-typedness is preserved under substitution on type and term
variables.

\begin{lemma}[Substitution lemma] \label{SubstLemma} For any term $\ve
  t$, basis term $\ve b$, context $\Gamma$, unit type $U$ and 
  type $T$,
  \begin{enumerate}
  \item If $\Gamma\vdash {\ve{t}} : {T}$, then $\Gamma[U/X] \vdash
    \ve{t}[U/X] : T[U/X]$.
  \item If $\Gamma,x:U \vdash {\ve{t}} : {T}$ and $\Gamma\vdash
    {\ve{b}} : {U}$, then $\Gamma\vdash {\ve{t}[\ve{b}/x]} : {T}$.\qed
  \end{enumerate}
\end{lemma}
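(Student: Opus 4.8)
The plan is to prove both statements by structural induction on the derivation of the hypothesis typing judgement --- on $\Gamma\vdash\ve{t}:T$ for part~1, and on $\Gamma, x:U\vdash\ve{t}:T$ for part~2 --- treating the two parts independently (part~2 never needs to re-apply a type substitution, so it does not depend on part~1). The guiding observation is that almost every case is forced by the induction hypothesis, because substitution interacts trivially with the algebraic apparatus: term substitution acts linearly, so it commutes with $+$ and $\alpha.(-)$ and leaves types untouched, while type substitution commutes with the type constructors, i.e. $(T+R)[U/X]=T[U/X]+R[U/X]$, $\bar 0[U/X]=\bar 0$, and $(\lfloor\alpha\rfloor.T)[U/X]=\lfloor\alpha\rfloor.(T[U/X])$, and moreover preserves $\equiv$ since $\equiv$ is the congruence generated by the $+$/$\bar 0$ laws. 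Consequently the cases for rules \textsc{+I}, \textsc{sI}, $\to_E$ and \textsc{Eq} go through by applying the induction hypothesis to the premises and reassembling, with no extra work.

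For part~1 the base cases \textsc{ax} and $\textsc{ax}_{\bar 0}$ are immediate, and the only cases carrying real content are $\forall_I$ and $\forall_E$, handled exactly as in System F. In both we first $\alpha$-rename the bound type variable $Y$ so that $Y\neq X$ and $Y\notin\mbox{FV}(U)$; this makes the substitution commute with the binder, e.g. $(\Lambda Y.\ve{s})[U/X]=\Lambda Y.(\ve{s}[U/X])$ and $(\forall Y.W)[U/X]=\forall Y.(W[U/X])$. For $\forall_I$ one then checks that the side condition transfers, i.e. $Y\notin\mbox{FV}(\Gamma[U/X])$, which follows from $Y\notin\mbox{FV}(\Gamma)$ together with $Y\notin\mbox{FV}(U)$. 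For $\forall_E$ one invokes the standard commutation identity for type substitutions, $W[V/Y][U/X]\equiv W[U/X]\bigl[V[U/X]/Y\bigr]$ (valid since $Y\neq X$ and $Y\notin\mbox{FV}(U)$), so that the type produced by the induction hypothesis matches $T[U/X]$ on the nose.

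For part~2 the decisive case is the axiom \textsc{ax}. If $\ve{t}=x$ then $T=U$ and $\ve{t}[\ve{b}/x]=\ve{b}$, so the conclusion $\Gamma\vdash\ve{b}:U$ is precisely the hypothesis; if $\ve{t}=y\neq x$ then $\ve{t}[\ve{b}/x]=y$ and the same axiom is valid after deleting $x:U$ from the context. Every other rule is again closed under the induction hypothesis, the only delicate points being the binders $\to_I$ and $\forall_I$: there one chooses the bound variable fresh with respect to $x$ and $\ve{b}$ (and, for $\forall_I$, renames the bound type variable so that $X\notin\mbox{FV}(\ve{b})$, which is needed because a well-typed term may carry free type variables absent from its type) so that substitution commutes with the binder, and the side condition $X\notin\mbox{FV}(\Gamma)$ of $\forall_I$ is untouched by a term substitution.

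The one genuine auxiliary that part~2 requires is \emph{weakening}: in the $\to_I$ case the induction hypothesis must be applied in the extended context $\Gamma, y:V$, so one needs $\Gamma, y:V\vdash\ve{b}:U$ from $\Gamma\vdash\ve{b}:U$. This follows from a routine weakening lemma (every rule is stable under adding fresh bindings to the context), which I would state and dispatch first by a trivial induction. Apart from this, the main obstacle is purely the bookkeeping around $\alpha$-renaming and the type-substitution commutation identity of part~1; once those are isolated as small sublemmas, the remaining cases are mechanical.
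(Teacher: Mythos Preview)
Your proposal is correct and follows the standard route: the paper does not give an explicit proof of this lemma at all, marking it with a \qed\ immediately after the statement and simply noting that it ``is standard in proofs of subject reduction for System~F-like systems'' with references to Krivine and Barendregt. Your induction on the typing derivation, with the expected care around $\alpha$-renaming in the $\forall$-rules and the auxiliary weakening lemma for the $\to_I$ case of part~2, is exactly what those references contain, adapted to the algebraic constructors of \CA; there is nothing to add.
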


\noindent Now we can give some details about the proof of Theorem~\ref{thm:SR}.
\begin{proof}[Proof of Theorem~\ref{thm:SR} (Subject Reduction up to $\pce$)] By
  structural induction on the derivation of $\ve t\to\ve t'$. We check that every
  reduction rule preserves the type up to the relation $\pce$. In each case, we
  first apply one or more generation lemmas to the left-hand side of
  the rule. Then we construct a type for the right-hand side which is either
  more precise (in the sense of relation $\pce$) or equivalent to that of the left-hand side.

For illustration purposes, we show the proof of the case corresponding
to the rewrite rule $\alpha.\ve{t} + \beta.\ve{t}\to(\alpha +
\beta).\ve{t}$.

We must prove that for any term $\ve t$, nonnegative real numbers
$\alpha$ and $\beta$, context $\Gamma$ and type $T$, if $\Gamma\vdash
{\alpha . \ve{t} + \beta . \ve{t}} : {T}$ then $\Gamma\vdash {(\alpha
  + \beta) . \ve{t}} : {R}$ with $T\pce R$.

  By lemma \ref{GenLemmas}.\ref{GenSum}, there exist $T_1, T_2$ such that
    $\Gamma\vdash{\alpha . \ve{t}} : {T_1}$ and 
    $\Gamma\vdash{\beta . \ve{t}} : {T_2}$, with 
    $T_1 + T_2 \equiv T$.
    Also by lemma \ref{GenLemmas}.\ref{GenScal}, there exist $R_1,
    R_2$ such that $\Gamma\vdash {\ve{t}} : {R_1}$ with 
    $\lfloor\alpha\rfloor . R_1 \equiv T_1$, and $\Gamma\vdash{\ve{t}}
    :{R_2}$ with 
    $\lfloor\beta\rfloor . R_2 \equiv T_2$.  Then from
    $\Gamma\vdash{\ve{t}} : {R_1}$ we can derive the sequent
    $\Gamma\vdash{(\alpha+\beta).\ve{t}} : {\lfloor\alpha +
      \beta\rfloor. R_1}$ using rule $\textsc{sI}$.

 We will now prove that $T\pce \lfloor\alpha + \beta\rfloor . R_1$. Since
 $R_1$ and $R_2$ are both types for $\ve{t}$, we have $R_1\equiv R_2$
 so
$\lfloor\alpha + \beta\rfloor . R_1
\sce (\lfloor\alpha\rfloor + \lfloor\beta\rfloor) . R_1 
 \equiv 
 \lfloor\alpha\rfloor . R_1 + \lfloor\beta\rfloor . R_1
 \equiv
 \lfloor\alpha\rfloor . R_1 + \lfloor\beta\rfloor . R_2  
 \equiv
 T_1 + T_2 
 \equiv
 T$.
Therefore, we conclude $T\pce \lfloor\alpha + \beta\rfloor . R_1$.\qed
\end{proof}

\subsection{Strong Normalisation}\label{sec:SN}

In this section, we prove the strong normalisation property for
\CA. That is, we show that all possible reductions for well-typed
terms are finite.  We use the standard notion of \emph{reducibility
  candidates}~\cite[Chapter 14]{Girard89}, extended to account for
linear combinations of terms. Confluence follows as a
corollary. Notice that we cannot reuse the proofs of previous typed
versions of \llin\ (\eg \cite{ArrighiDiazcaroQPL09,DiazcaroPetit10})
since in \cite{ArrighiDiazcaroQPL09} only terms of the same type can
be added together, and in \cite{DiazcaroPetit10} the calculus under
consideration is a fragment of \CA. Therefore, none of them have the
same set of terms as \CA.

A closed term in \CA\ is a \emph{value} if it is an abstraction, a sum
of values or a scalar multiplied by a value, \ie values are closed
terms that conform to the following grammar:

\begin{grammar}
  [(colon){::$=$}]
  [(semicolon){$|$}]
  [(period){.}]
  [(nonterminal){\bf}{}]
<v> : $\lambda x \colon U . <t>$ ; $\Lambda X . <t>$ ; <v> $+$ <v> ; $\alpha.$<v>
\end{grammar}

If a closed term is not a value, it is said to be \emph{neutral}. A term
\ve{t} is \emph{normal} if it has no reducts, \ie there is no term
\ve{s} such that $\ve{t}\to\ve{s}$. A \emph{normal form} for a term
\ve{t} is a normal term \ve{t'} such that $\ve{t} \to^* \ve{t'}$.
We define $\text{Red}(\ve{t})$ as the set of reducts of $\ve{t}$
reachable in one step. 

A term \ve{t} is \emph{strongly normalising} if there are no infinite
reduction sequences starting from \ve{t}. We write $\mathsf{SN}_0$ for
the set of strongly normalising closed terms of \CA.

\begin{definition}[Reducibility candidates]
  A set of terms $\mathsf{A}$ is a \emph{reducibility candidate} if
  it satisfies the following conditions:

  \begin{description}
  \item[(CR$_1$)] \emph{Strong normalisation}: $\mathsf{A} \subseteq
    \mathsf{SN}_0$
  \item[(CR$_2$)] \emph{Stability under reduction}: If
    $\ve{t}\in\mathsf{A}$ and $\ve{t}\to^*\ve{t'}$, then
    $\ve{t'}\in\mathsf{A}$.
  \item[(CR$_3$)] \emph{Stability under neutral expansion}: If \ve{t}
    is neutral and $\text{Red}(\ve{t}) \subseteq \mathsf{A}$, then
    $\ve{t}\in\mathsf{A}$.
  \end{description}

\noindent  In the sequel, $\mathsf{A}$, $\mathsf{B}$ stand for reducibility
  candidates, and $\mathsf{RC}$ stands for the set of all reducibility
  candidates.
\end{definition}

The idea of the strong normalisation proof is to interpret types
by reducibility candidates and then show that whenever a term has a
type, it is in a reducibility candidate.

\begin{remark}
  Note that $\mathsf{SN}_0$ is a reducibility candidate. In addition, the term \ve{0} is a neutral and normal term, so it is in every reducibility candidate. This ensures that every reducibility candidate is inhabited, and since every typable term can be closed by typing rule $\to_I$, it is enough to consider only closed terms.
\end{remark}

The following lemma ensures that the strong normalisation property is
preserved by linear combination.

\begin{lemma}\label{lemmSNcloLC}
  If $\ve{t}$ and $\ve{r}$ are strongly normalising, then
  $\alpha.\ve{t}+\beta.\ve{r}$ is strongly normalising.
\end{lemma}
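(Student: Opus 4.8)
The plan is to deduce the lemma from a stronger, more uniform statement: \emph{every term all of whose maximal non-algebraic subterms are strongly normalising is itself strongly normalising}. Call a context an \emph{algebraic context} if it is built only from the constructors $+$, $\alpha.(\cdot)$, the constant $\ve{0}$ and holes. Every term $\ve s$ can be written uniquely, modulo associativity and commutativity of $+$, as an algebraic context whose holes are filled by terms $\ve s_1,\dots,\ve s_n$, none of which is a sum, a scalar product or $\ve{0}$; equivalently, each $\ve s_i$ is a variable, an abstraction, a type abstraction, an application, or a type application, and these are the \emph{maximal non-algebraic subterms} of $\ve s$. Since any subterm of a strongly normalising term is strongly normalising (an infinite reduction of the subterm lifts through any context to an infinite reduction of the whole term), the term $\alpha.\ve t+\beta.\ve r$ has all its maximal non-algebraic subterms among those of $\ve t$ and those of $\ve r$, hence all of them are strongly normalising and the lemma becomes an instance of the stronger statement.

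To prove the stronger statement I would use well-founded induction on a composite measure. Since $\to$ is finitely branching (each term has finitely many redexes), for a strongly normalising term $\ve u$ let $h(\ve u)$ be the length of the longest reduction sequence from $\ve u$ (finite, by König's lemma); then $h(\ve u')<h(\ve u)$ whenever $\ve u\to\ve u'$, and $h(\ve u')\le h(\ve u)$ whenever $\ve u'$ is a subterm of $\ve u$. Independently, the rules of Groups E and F of Table~\ref{tab:reduction} form a terminating rewrite system on arbitrary terms: the interpretation that assigns $2$ to $\ve{0}$ and to every maximal non-algebraic subterm and sets $[\ve p+\ve q]=[\ve p]+[\ve q]+1$ and $[\alpha.\ve p]=2[\ve p]$ is monotone in contexts and is strictly decreased by each rule of Groups E and F; write $\kappa(\ve s)$ for this value. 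Order terms lexicographically by the pair $\langle M(\ve s),\kappa(\ve s)\rangle$, where $M(\ve s)=\{h(\ve s_1),\dots,h(\ve s_n)\}$ is compared under the multiset extension of $<$; this order is well-founded. It suffices to show that every one-step reduct $\ve s'$ of $\ve s$ again has all its maximal non-algebraic subterms strongly normalising and satisfies $\langle M(\ve s'),\kappa(\ve s')\rangle<\langle M(\ve s),\kappa(\ve s)\rangle$, since then the induction hypothesis makes every such $\ve s'$ strongly normalising, and a term all of whose one-step reducts are strongly normalising is strongly normalising.

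For the case analysis, observe that every rule of Group A and every $\beta$-rule has an application or a type application at the root of its left-hand side, hence fires at a node lying inside one of the $\ve s_i$; the same may happen for a rule of Group E or F. In that case $\ve s_i$ is replaced by some $\ve s_i'$ with $h(\ve s_i')<h(\ve s_i)$, and whether $\ve s_i'$ is non-algebraic or unfolds into a linear combination (as when a $\beta$-redex produces a sum), each of its maximal non-algebraic subterms is a subterm of $\ve s_i'$ and so has $h$-value $<h(\ve s_i)$; thus $M$ strictly decreases. Otherwise a rule of Group E or F fires at a node of the algebraic context; such a step leaves the multiset $M$ unchanged, except that the factorisation rules $\alpha.\ve u+\beta.\ve u\to(\alpha+\beta).\ve u$ (and its variants) and $0.\ve u\to\ve{0}$ may delete one occurrence of the leaves of $\ve u$, which makes $M$ strictly smaller; in either subcase $M$ does not increase, and when it stays equal the step is a genuine Group E/F rewrite in the algebraic context, so $\kappa$ strictly decreases. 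Hence the pair strictly decreases in all cases. The main obstacle is precisely this bookkeeping: one must be certain that a $\beta$-step which explodes into a large linear combination cannot increase $M$ (it cannot, because every new leaf is a subterm of $\ve s_i'$ and $h(\ve s_i')<h(\ve s_i)$), and one must exhibit a termination measure $\kappa$ handling the distributive rule $\alpha.(\ve u+\ve v)\to\alpha.\ve u+\alpha.\ve v$, the only algebraic rule that enlarges the context — which the polynomial interpretation above supplies.
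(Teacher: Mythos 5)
Your argument is correct, and its core idea coincides with the paper's: the paper disposes of this lemma in one sentence by invoking the positive algebraic measure of Arrighi and Dowek (Proposition~10 of the \llin{} paper), which every Group~E/F reduction strictly decreases, and ``inducting'' on it. What you supply---and what the paper delegates entirely to that citation---is the combination of such a measure with the strong-normalisation hypotheses. Your decomposition of a term into an algebraic context over its maximal non-algebraic subterms, the multiset $M$ of longest-reduction lengths of those leaves, and the lexicographic order on $\langle M,\kappa\rangle$ make explicit the one point the paper's one-liner glosses over: $\beta$- and Group~A steps do not decrease any algebraic measure and may even explode a leaf into a large linear combination, so one must argue separately that they strictly decrease the multiset component (every new leaf is a subterm of a one-step reduct of the old leaf, hence of strictly smaller height), while the context-level Group~E/F steps never increase $M$ and, when they leave it fixed, strictly decrease your polynomial interpretation $\kappa$ (which indeed validates all the E/F rules, including the context-enlarging rule $\alpha.(\ve u+\ve v)\to\alpha.\ve u+\alpha.\ve v$). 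The two silent prerequisites of your argument---that every term former has a congruence rule, so subterms of strongly normalising terms are strongly normalising and $h$ is monotone under the subterm relation, and that reduction modulo AC is finitely branching, so $h$ is well defined---both hold in \CA. In short: same strategy, but your version is self-contained and proves the stronger, more reusable statement that any algebraic combination of strongly normalising terms is strongly normalising, whereas the paper buys brevity at the cost of leaving the interleaving argument implicit.
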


\begin{proof} Induction on a positive algebraic measure defined on
  terms of $\llin$~\cite[Proposition 10]{ArrighiDowekRTA08}, showing that every
  algebraic reduction makes this number strictly decrease. \qed
\end{proof}

The following operators 
ensure that all types of \CA\ are interpreted by a reducibility
candidate.

\begin{definition}[Operators in $\mathsf{RC}$]\label{def:operatorsRC}
  Let $\mathsf{A}$, $\mathsf{B}$ be reducibility candidates. We
  define operators $\to$, $\oplus$, $\Lambda$ over $\mathsf{RC}$ and $\overline{\emptyset}$ such that

  \begin{itemize}
  \item $\mathsf{A}\to\mathsf{B}$ is the closure of
    $\{\ve{t}\mid\forall\ve{b}\in\mathsf{A},\, \ve{b} \text{ a basis
      term } \Rightarrow (\ve{t})\,\ve{b}\in\mathsf{B}\}$ under
    (CR$_3$),
  \item $\mathsf{A}\oplus\mathsf{B}$ is the closure of $\{
    \alpha.\ve{t}+\beta.\ve{r} \mid \ve{t}\in\mathsf{A},
    \ve{r}\in\mathsf{B} \}$ under (CR$_2$) and (CR$_3$),
  \item $\Lambda \mathsf{A}$ is the set $\{ \ve{t} \mid
    \forall V,  \ve{t}@V\in\mathsf{A} \}$
  \item $\overline{\emptyset}$ is the closure of $\emptyset$ under (CR$_3$).
  \end{itemize}
\end{definition}

\begin{remark}
Notice that $0$ is neutral and it is in normal form. Therefore the closure of $\emptyset$ under (CR$_3$) is not empty, it includes, at least, the term $0$.
\end{remark}

\begin{lemma}\label{lemm:operatorsRC}
  Let $\mathsf{A}$ and $\mathsf{B}$ be reducibility candidates. Then
  $\mathsf{A\to B}$, $\mathsf{A\oplus B}$, $\Lambda \mathsf{A}$, $A \cap B$ and $\overline{\emptyset}$
  are all reducibility candidates.
\end{lemma}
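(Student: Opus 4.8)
The plan is to prove each of the five closure constructions is a reducibility candidate by verifying conditions (CR$_1$), (CR$_2$), (CR$_3$) individually. For the two constructions defined as ``the closure under certain CR conditions'' ($\mathsf{A\to B}$ and $\mathsf{A\oplus B}$, and also $\overline{\emptyset}$), the conditions imposed by the closure operation are immediate; the real work is (a) checking the remaining conditions not built into the closure, and (b) checking that the closure operation did not destroy (CR$_1$), \ie that the newly added terms are still strongly normalising. So the key recurring lemma I would isolate first is: \emph{the (CR$_3$)-closure of a set $S\subseteq\mathsf{SN}_0$ is again contained in $\mathsf{SN}_0$}. This follows because a neutral term $\ve t$ all of whose one-step reducts are in $\mathsf{SN}_0$ is itself in $\mathsf{SN}_0$ (any infinite reduction from $\ve t$ must pass through $\mathrm{Red}(\ve t)$); one argues by a well-founded induction on the closure's inductive construction. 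The analogous statement for simultaneous (CR$_2$)- and (CR$_3$)-closure needs (CR$_2$) to also preserve $\mathsf{SN}_0$, which is trivial since reducts of strongly normalising terms are strongly normalising.

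With that lemma in hand I would go through the five cases. For $\overline{\emptyset}$: (CR$_3$) holds by construction; (CR$_1$) holds because $\emptyset\subseteq\mathsf{SN}_0$ and the closure lemma applies; (CR$_2$) requires noting that the closure of $\emptyset$ under (CR$_3$) happens to also be closed under reduction — here I would use that the reducts of a term added by (CR$_3$) are themselves among the terms used to justify its inclusion, so a small induction shows (CR$_2$) comes for free. For $\mathsf{A\cap B}$: all three conditions are inherited pointwise from $\mathsf A$ and $\mathsf B$ — (CR$_1$) because $\mathsf{A\cap B}\subseteq\mathsf A\subseteq\mathsf{SN}_0$, (CR$_2$) and (CR$_3$) by intersecting. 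For $\Lambda\mathsf A$: (CR$_1$) because if $\ve t@V\in\mathsf A\subseteq\mathsf{SN}_0$ then $\ve t$ is strongly normalising (an infinite reduction from $\ve t$ lifts to one from $\ve t@V$ via the congruence rule for $@$); (CR$_2$) because if $\ve t\to^*\ve t'$ then $\ve t@V\to^*\ve t'@V$, which lies in $\mathsf A$ by its (CR$_2$); (CR$_3$) because if $\ve t$ is neutral with $\mathrm{Red}(\ve t)\subseteq\Lambda\mathsf A$ then $\ve t@V$ is neutral and every one-step reduct of $\ve t@V$ is of the form $\ve t'@V$ with $\ve t'\in\mathrm{Red}(\ve t)$ (there is no head-redex since $\ve t$ is not a $\Lambda$-abstraction — this uses that $\ve t$ neutral means it is not a value), hence $\mathrm{Red}(\ve t@V)\subseteq\mathsf A$ and $\ve t@V\in\mathsf A$ by (CR$_3$) of $\mathsf A$, so $\ve t\in\Lambda\mathsf A$.

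For $\mathsf{A\to B}$: (CR$_2$) and (CR$_3$) are built into the closure definition, so only (CR$_1$) needs checking, via the closure lemma — but first I must show the pre-closure set $\{\ve t\mid\forall\ve b\in\mathsf A\text{ basis}, (\ve t)\,\ve b\in\mathsf B\}$ is contained in $\mathsf{SN}_0$. Here I instantiate $\ve b$ with a basis term known to lie in $\mathsf A$ — the candidate is the term $\ve 0$, but $\ve 0$ is not a basis term; instead I would use that every reducibility candidate contains some basis term, or rather argue directly: if $\ve t\notin\mathsf{SN}_0$ then $(\ve t)\,\ve x\notin\mathsf{SN}_0$ for a fresh variable, contradicting membership in $\mathsf B$ once we know a suitable basis term populates $\mathsf A$. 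This is the delicate point and the main obstacle: I need $\mathsf A$ to contain at least one basis term to run the argument, and the Remark only guarantees $\ve 0\in\mathsf A$, which is neutral, not basis. The standard fix is to observe that free variables $x$ (which are basis terms) belong to every reducibility candidate — indeed $x$ is neutral with $\mathrm{Red}(x)=\emptyset\subseteq\mathsf A$, so (CR$_3$) forces $x\in\mathsf A$ — wait, a variable is a basis term hence \emph{not} neutral, so (CR$_3$) does not apply to it. This is genuinely the subtle case, and I expect to resolve it by the usual Girard trick of working with an auxiliary notion or by noting that the definition of $\mathsf{A\to B}$ can equivalently quantify only over the basis terms actually in $\mathsf A$, using that $(\ve t)\,\ve b\in\mathsf{SN}_0$ plus strong normalisation of $\ve b$ forces $\ve t\in\mathsf{SN}_0$ by a measure argument on the reduction of the application — analogous to Lemma~\ref{lemmSNcloLC}. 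Finally, for $\mathsf{A\oplus B}$: (CR$_2$), (CR$_3$) are built in; (CR$_1$) reduces via the closure lemma to showing $\alpha.\ve t+\beta.\ve r\in\mathsf{SN}_0$ whenever $\ve t\in\mathsf A$, $\ve r\in\mathsf B$, which is exactly Lemma~\ref{lemmSNcloLC}.
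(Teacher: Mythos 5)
Your overall strategy coincides with the paper's: the paper proves the lemma case by case, detailing only $\overline{\emptyset}$ and $\mathsf{A\oplus B}$, and its ``induction on the construction of $\mathsf{A\oplus B}$'' is exactly your closure lemma (the (CR$_2$)/(CR$_3$)-closure of a subset of $\mathsf{SN}_0$ stays inside $\mathsf{SN}_0$, plus Lemma~\ref{lemmSNcloLC} for the base set); your treatment of $\mathsf{A\oplus B}$, $\overline{\emptyset}$, $\mathsf{A\cap B}$ and $\Lambda\mathsf{A}$ is correct and in fact more explicit than the paper's ``the rest of the cases are similar''.

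The one genuine gap is the one you flag yourself and do not close: condition (CR$_1$) for $\mathsf{A\to B}$. Your argument needs at least one basis term in $\mathsf{A}$, and neither of the escape routes you sketch works in this setting. Variables are excluded because candidates consist of closed terms ($\mathsf{SN}_0$ is a set of closed terms), and a closed basis term is an abstraction, hence a value, hence not neutral, so (CR$_3$) can never force one into a candidate; in particular $\overline{\emptyset}$, being generated from $\emptyset$ by (CR$_3$) alone, contains only neutral terms and therefore no basis term at all. For such an $\mathsf{A}$ the pre-closure set $\{\ve{t}\mid\forall\ve{b}\in\mathsf{A},\ \ve{b}\text{ a basis term}\Rightarrow(\ve{t})\,\ve{b}\in\mathsf{B}\}$ is vacuously the set of all closed terms, which is not contained in $\mathsf{SN}_0$; so your fallback argument (from $(\ve{t})\,\ve{b}\in\mathsf{SN}_0$ deduce $\ve{t}\in\mathsf{SN}_0$ by lifting an infinite reduction of $\ve{t}$ through the congruence rule for application --- which is fine, and does not even need strong normalisation of $\ve{b}$) has nothing to be instantiated with. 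To close the case you must either (i) add the hypothesis that $\mathsf{A}$ contains a basis term --- which is all the Adequacy Lemma actually uses, since there the arrow operator is only applied to interpretations of source types for which the satisfying substitution supplies the required basis term $\ve{v}$ --- or (ii) build membership in $\mathsf{SN}_0$ into the pre-closure set itself. The paper's own proof is silent on this point (it is hidden under ``similar''), so you have correctly located a real subtlety rather than merely failed to reproduce an argument; but as written your proof of the $\to$ case is incomplete.
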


\begin{proof} We show the proof for $\overline{\emptyset}$ and
  $\mathsf{A\oplus B}$. The rest of the cases are similar.

\begin{itemize}
 \item The three conditions hold trivially for $\overline{\emptyset}$.
 \item Let $\ve{t}\in\mathsf{A\oplus B}$. We must check that the
     three conditions hold.
     \begin{description}
     \item[(CR$_1$)] Induction on the construction of
       $\mathsf{A\oplus B}$. If
       $\ve{t}\in\{\alpha.\ve{t}+\beta.\ve{r}\mid\ve{t}\in
       \mathsf{A}, \ve{r}\in\mathsf{B} \}$, the result is trivial by
       condition (CR$_1$) on $\mathsf{A}$ and $\mathsf{B}$ and
       lemma~\ref{lemmSNcloLC}. If $\ve{t}\to^*\ve{t'}$ with
       $\ve{t}\in\mathsf{A\oplus B}$, then \ve{t} is strongly
       normalising by induction hypothesis; therefore, so is
       \ve{t'}. If \ve{t} is neutral and
       $\text{Red}(\ve{t})\subseteq\mathsf{A\oplus B}$, then \ve{t}
       is strongly normalising since by induction hypothesis all
       elements of $\text{Red}(\ve{t})$ are strongly normalising.
      
      \item[(CR$_2$) and (CR$_3$)] Trivial by construction of
        $\mathsf{A\oplus B}$. \qed
      \end{description}
\end{itemize}
  
\end{proof}

\newcommand{\interp}[1]{\ensuremath{\llbracket #1 \rrbracket}}

\noindent We can now introduce the interpretation function for the types of
\CA. The definition relies on the operators for reducibility
candidates defined above.

A \emph{valuation} $\rho$ is a partial function from type variables to
reducibility candidates, written as a sequence of comma-separated
mappings of the form $X \mapsto \mathsf{A}$, with $\emptyset$
denoting the empty valuation.

\begin{definition}[Reducibility model] Let $T$ be a type and $\rho$ a
  valuation. We define the \emph{interpretation} $\interp{T}_\rho$ as follows:
\[
\begin{array}{rcl}
  \interp{X}_\rho & = & \rho(X) \\
  \interp{\bar{0}}_\rho & = & \overline{\emptyset} \\
  \interp{U\to T}_\rho & = & \interp{U}_\rho \to \interp{T}_\rho \\
  \interp{T + R}_\rho & = & \interp{T}_\rho \oplus \interp{R}_\rho \\
  \interp{\forall X.U}_\rho & = & \bigcap_{\mathsf{S}\in\mathsf{RC}} \Lambda \interp{U}_{\rho,X\mapsto S}
\end{array}
\]
\end{definition}

Note that lemma \ref{lemm:operatorsRC} ensures that every type is
interpreted by a reducibility candidate.

A \emph{substitution} $\sigma$ is a partial function from term
variables to basis terms, written as a sequence of semicolon-separated
mappings of the form $x\mapsto \ve{b}$, with $\emptyset$ denoting the
empty substitution. The action of substitutions on terms is given by
$$\ve{t}_\emptyset = \ve{t}, \qquad\qquad \ve{t}_{x\mapsto \ve{b};\sigma} = \ve{t}[\ve{b}/x]_\sigma$$

A \emph{type substitution} $\delta$ is a partial function from type
variables to unit types, written as a sequence of semicolon-separated
mappings of the form $X\mapsto U$, with $\emptyset$ denoting the
empty substitution. The action of type substitutions on types is given by
$$T_\emptyset = T, \qquad\qquad T_{X\mapsto U;\delta} = T[U/X]_\delta$$
They are extended to act on terms in the natural way.

Let $\Gamma$ be a typing context, then we say that a substitution pair
$\langle\sigma,\delta\rangle$ \emph{satisfies} $\Gamma$ for a valuation $\rho$ (written
$\langle\sigma,\delta\rangle\in\interp{\Gamma}_\rho$) if $(x : U) \in\Gamma$ implies
$x_\sigma\in\interp{U_\delta}_\rho$.

A typing judgement $\Gamma\vdash {\ve{t}} : {T}$ is said to be
\emph{valid} (written $\Gamma \vDash \ve{t} : T$) if for every
valuation $\rho$, for every type substitution $\delta$ and every
substitution $\sigma$ such that $\langle\sigma,\delta\rangle\in\interp{\Gamma}_\rho$, we have
$(\ve{t}_{\delta})_\sigma \in\interp{T}_\rho$.
The following lemma proves that every derivable typing judgement is valid.

\begin{lemma}[Adequacy Lemma]\label{adequacyLemma}
  Let $\Gamma\vdash {\ve{t}} : {T}$, then $\Gamma\vDash\ve{t}:T$.
\end{lemma}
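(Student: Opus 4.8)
The plan is to prove the Adequacy Lemma by structural induction on the typing derivation $\Gamma \vdash \ve{t} : T$, showing that each typing rule preserves validity. Fix an arbitrary valuation $\rho$, type substitution $\delta$ and substitution $\sigma$ with $\langle\sigma,\delta\rangle \in \interp{\Gamma}_\rho$; we must show $(\ve{t}_\delta)_\sigma \in \interp{T}_\rho$. The base cases are immediate: for \textsc{ax}, $x_\sigma \in \interp{U_\delta}_\rho$ holds by the definition of satisfaction; for $\textsc{ax}_{\bar 0}$, the term $\ve 0$ lies in $\overline{\emptyset} = \interp{\bar 0}_\rho$ since $\ve 0$ is neutral and normal (it is in every reducibility candidate). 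The rule \textsc{Eq} follows once we observe that $T \equiv R$ implies $\interp{T}_\rho = \interp{R}_\rho$, which is a short separate induction on the congruence $\equiv$ using commutativity/associativity/unit of $\oplus$ up to set equality (these hold because $\oplus$ is built by closing a symmetric, $\bar 0$-absorbing generating set under (CR$_2$), (CR$_3$)).

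For the inductive cases I would proceed rule by rule. For $\to_\textsc{I}$: given $\Gamma, x:U \vdash \ve t : T$, by I.H.\ for any basis term $\ve b \in \interp{U_\delta}_\rho$ the extended pair satisfies $\Gamma, x:U$, so $((\ve t_\delta)[\ve b/x])_\sigma \in \interp{T}_\rho$; since $(\lambda x:U_\delta.\, \ve t_\delta)_\sigma\, \ve b$ reduces to exactly this term and is neutral, a standard (CR$_3$)-style argument — together with the fact that the body and $\ve b$ are strongly normalising, so only finitely many one-step reducts must be checked — gives membership in $\interp{U_\delta}_\rho \to \interp{T}_\rho$. The rules \textsc{+I} and \textsc{sI} are direct from the definition of $\oplus$ and its closure properties, noting for \textsc{sI} that $\alpha.\ve t$ has the shape $\alpha.\ve t + 0.\ve r$ up to reduction, or more simply that $\interp{\lfloor\alpha\rfloor.T}_\rho$ is closed under (CR$_2$)/(CR$_3$) and contains $\alpha.\ve t$ whenever $\ve t \in \interp{T}_\rho$ (using $\overline{\emptyset} \subseteq$ any candidate, and that $\alpha.\ve t + \ve 0 \to \alpha.\ve t$). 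For $\forall_I$ and $\forall_E$ I would use the usual substitution-commutation identity $\interp{U[V/X]}_\rho = \interp{U}_{\rho, X \mapsto \interp{V}_\rho}$ (another routine induction on $U$), plus the definition of $\Lambda$ and the intersection over all candidates; the side condition $X \notin \mathrm{FV}(\Gamma)$ ensures the valuation extension does not disturb satisfaction of $\Gamma$.

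The main obstacle is the application rule $\to_\textsc{E}$, whose conclusion type $\sum_{i=1}^\alpha (\beta.T_i)$ is obtained from $\ve t : \sum_{i=1}^\alpha (U \to T_i)$ and $\ve r : \beta.U$. Here the I.H.\ places $\ve t_\sigma$ in the iterated $\oplus$ of candidates $\interp{U}_\rho \to \interp{T_i}_\rho$ and $\ve r_\sigma$ in the $\beta$-fold $\oplus$ of $\interp{U}_\rho$, and I must show the \emph{application} $\ve t_\sigma\, \ve r_\sigma$ lands in $\bigoplus_i \beta.\interp{T_i}_\rho$. The argument has to push the application through both layers of $\oplus$ using the distributivity rewrite rules of Group A ($(\ve u + \ve v)\ve w \to \ve u\ve w + \ve v\ve w$, $\ve w(\ve u + \ve v) \to \ve w\ve u + \ve w\ve v$, $(\alpha.\ve u)\ve v \to \alpha.(\ve u\ve v)$, $\ve v(\alpha.\ve u) \to \alpha.(\ve v\ve u)$, $\ve 0\ve u \to \ve 0$, $\ve u\ve 0 \to \ve 0$) to normalise the product into a linear combination of terms of the form $\ve t_i'\, \ve b_j'$ with $\ve t_i' \in \interp{U \to T_i}_\rho$ and $\ve b_j'$ a basis term in $\interp{U}_\rho$, each of which is in $\interp{T_i}_\rho$ by definition of the arrow operator — so I would isolate this as an auxiliary lemma stating that $\mathsf{A}_1 \oplus \cdots$ applied to $\mathsf{B}_1 \oplus \cdots$ is contained in the appropriate $\oplus$ of the $\mathsf{A}_k \mathsf{B}_l$'s, proved by induction on the $\oplus$-structure and bottoming out in a (CR$_3$) neutral-expansion step (the product $\ve t_\sigma\,\ve r_\sigma$ is neutral, everything in sight is strongly normalising by (CR$_1$), and all one-step reducts are handled either by the I.H.\ of that auxiliary induction or by stability of the target candidate under reduction). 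The delicate point is making the induction well-founded — one reduces simultaneously on the construction of the two $\oplus$-sets and on the strong-normalisation measure of the two argument terms — but once the auxiliary lemma is in place, $\to_\textsc{E}$ follows immediately.
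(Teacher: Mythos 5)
Your proposal is correct and follows essentially the same route as the paper: induction on the typing derivation, with types interpreted as reducibility candidates and the neutral-term expansion condition (CR$_3$) doing the work in the abstraction case, exactly as in the paper's treatment of $\to_\textsc{I}$ and \textsc{sI}. The paper only details those two cases; your additional sketch of $\to_\textsc{E}$ via an auxiliary lemma pushing application through the $\oplus$ layers is consistent with what a complete proof of the omitted cases requires.
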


\begin{proof}
  We proceed by induction on the derivation of $\Gamma\vdash\ve{t} :
  T$. The base cases (rules \textsc{Ax} and \textsc{Ax}$_{\bar{0}}$)
  are trivial. We show the cases for rules $\to_I$ and \textsc{sI} for
  illustration purposes.  

\begin{itemize}
\item Case $\to_I$: $\inferrule{\Gamma,x:U\vdash \ve{t} :
    T}{\Gamma\vdash\lambda x:U.\ve{t} : U\to T}$

  By induction hypothesis, we have $\Gamma,x:U\vDash \ve{t} : T$. We
  will prove that for all $\rho$ and
  $\langle\sigma,\delta\rangle\in\interp{\Gamma}_\rho$, $((\lambda
  x:U.\ve{t})_\delta)_\sigma\in\interp{U\to T}_\rho$. Suppose that $\sigma =
  (x\mapsto \ve{v};\sigma'')\in\interp{\Gamma,x:U}_\rho$. Let
  $\ve{b}\in\interp{U}_\rho$ (note that there is at least one basis
  term, $\ve{v}$, in $\interp{U}_\rho$), and let $\sigma'=(x\mapsto
  \ve b;\sigma'')$. So $\sigma'\in\interp{\Gamma, x : U}_\rho$, hence
  $(\ve{t}_\delta)_{\sigma'}\in\interp{T}_\rho$. This means both \ve{b} and
  $(\ve{t}_\delta)_{\sigma'}$ are strongly normalising, so we shall first prove
  that all reducts of $((\lambda x:U.\ve{t})_\delta)_\sigma\,\ve{b}$ are in
  $\interp{T}_\rho$.

    \begin{itemize}
    \item $((\lambda x:U.\ve{t})_\delta)_\sigma\,\ve{b}\to(\lambda
      x:U.\ve{t'})\,\ve{b}$ or $((\lambda x:U.\ve{t})_\delta)_\sigma\,\ve{b}\to
      ((\lambda x:U.\ve{t})_\delta)_\sigma\,\ve{b'}$, with
      $(\ve{t}_\delta)_\sigma\to\ve{t'}$ or $\ve{b}\to\ve{b'}$.  The result
      follows by induction on the reductions of $\ve{b}$ and
      $(\ve{t}_\delta)_{\sigma'}$, respectively: by induction hypothesis we
      have $\ve{t'},\ve{b'}\in\interp{T}_\rho$, so both $(\lambda
      x:U.\ve{t'})\,\ve{b}$, $(((\lambda
      x:U.\ve{t})_\delta)_\sigma)\,\ve{b'}\in\interp{T}_\rho$.

    \item $((\lambda
      x:U.\ve{t})_\delta)_\sigma\,\ve{b}\to(\ve{t}_\delta)_\sigma[\ve{b}/x]
      = (\ve{t}_\delta)_{\sigma'}\in\interp{T}_\rho$.
    \end{itemize}

    Therefore, $((\lambda x:U.\ve{t})_\delta)_\sigma \,\ve{b}$ is a neutral
    term with all of its reducts in $\interp{T}_\rho$, so  $((\lambda
    x:U.\ve{t})_\delta)_\sigma \,\ve{b}\in\interp{T}_\rho$. Hence, by
    definition of $\to$, we conclude $((\lambda
    x:U.\ve{t})_\delta)_\sigma\in\interp{U\to T}_\rho$.

\item Case \textsc{sI}:
  $\inferrule{\Gamma\vdash\ve{t} : T}{\Gamma\vdash \alpha.\ve{t} : \lfloor\alpha\rfloor.T}$

  By induction hypothesis, we have $\Gamma\vDash\ve{t} : T$.  Let
  $\rho$ be a valuation and $\langle\sigma,\delta\rangle$ a
  substitution pair satisfying $\Gamma$ in $\rho$. So
  $(\ve{t}_\delta)_\sigma\in\interp{T}_\rho$, hence
  $\alpha.(\ve{t}_\delta)_\sigma\in\bigoplus_{i=1}^{\lfloor\alpha\rfloor}\interp{T}_\rho
  = \interp{\lfloor\alpha\rfloor . T}_\rho$ by construction. \qed
\end{itemize}
\end{proof}

\noindent Since this proves that every well-typed term is in a reducibility
candidate, we can easily show that such terms are strongly normalising.

\begin{theorem}[Strong Normalisation for \CA]\label{thm:SN}
  All typable terms of \CA\ are strongly normalising.
\end{theorem}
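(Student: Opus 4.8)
The plan is to read off the theorem from the Adequacy Lemma (Lemma~\ref{adequacyLemma}) together with the fact that every reducibility candidate is contained in $\mathsf{SN}_0$, which is exactly condition (CR$_1$) applied to the interpretation of the type. The Adequacy Lemma directly yields strong normalisation for \emph{closed} terms, so the only genuine work is to reduce the general (possibly open) case to the closed one.

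First I would close the term. Suppose $\Gamma\vdash\ve{t}:T$ with $\Gamma = x_1:U_1,\dots,x_n:U_n$ (typing contexts bind term variables to unit types). Applying rule $\to_\textsc{I}$ repeatedly gives $\vdash\ve{t}':U_1\to\cdots\to U_n\to T$, where $\ve{t}' = \lambda x_1:U_1.\,\cdots\,\lambda x_n:U_n.\,\ve{t}$; each intermediate type $U_k\to(\cdots)$ is a unit type since the codomain of an arrow may be a general type, so every judgement in this derivation is legal. It then suffices to show that $\ve{t}'$ is strongly normalising: $\ve{t}$ occurs as a subterm of $\ve{t}'$, and using the congruence rules for $\lambda$-abstraction in Table~\ref{tab:reduction}, any infinite reduction sequence starting from $\ve{t}$ lifts to an infinite reduction sequence starting from $\ve{t}'$, so strong normalisation of $\ve{t}'$ implies that of $\ve{t}$.

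Next I would invoke the reducibility model on $\ve{t}'$. Pick a valuation $\rho$ defined on (at least) the free type variables of $U_1\to\cdots\to U_n\to T$, mapping each of them to $\mathsf{SN}_0$, which is a reducibility candidate; take $\sigma=\emptyset$ and $\delta=\emptyset$, which vacuously satisfy the empty context, so $\langle\emptyset,\emptyset\rangle\in\interp{\emptyset}_\rho$. By Lemma~\ref{adequacyLemma}, the judgement $\vdash\ve{t}':U_1\to\cdots\to U_n\to T$ is valid, hence $\ve{t}' = (\ve{t}'_\emptyset)_\emptyset\in\interp{U_1\to\cdots\to U_n\to T}_\rho$. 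By Lemma~\ref{lemm:operatorsRC} this interpretation is a reducibility candidate, so by (CR$_1$) it is contained in $\mathsf{SN}_0$; therefore $\ve{t}'$ is strongly normalising, and so is $\ve{t}$.

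There is no serious obstacle here, since all the difficulty has been front-loaded into the Adequacy Lemma and the closure properties of reducibility candidates. The only points requiring a little care are the bookkeeping around closing the term — checking that the iterated $\to_\textsc{I}$ steps yield a well-formed unit type and that strong normalisation is inherited by subterms — and choosing the valuation $\rho$ so that the interpretation $\interp{\cdot}_\rho$ of the closed type is actually defined. Confluence of \CA\ then follows as a corollary from this theorem, local confluence of the rewrite system, and Newman's lemma.
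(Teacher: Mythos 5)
Your proof is correct and takes essentially the same route as the paper: close the open term, apply the Adequacy Lemma to the closed term with the empty substitution pair, and conclude via Lemma~\ref{lemm:operatorsRC} and condition (CR$_1$). You merely fill in two details the paper leaves implicit --- that strong normalisation of the closure $\lambda x_1:U_1.\cdots\lambda x_n:U_n.\,\ve{t}$ transfers back to $\ve{t}$ via the abstraction congruence rules, and that a valuation must be chosen for any free type variables of the type (the paper simply writes $\interp{T}_\emptyset$) --- so the argument matches.
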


\begin{proof}
  Let \ve{t} be a term of \CA\ of type $T$. If $\ve t$ is an open term, the open variables are in the context, so we can always close it and the term will be closed and typable. Then we can consider $\ve t$ to be closed. Then, by the Adequacy
  Lemma (lemma~\ref{adequacyLemma}), we know that $(\ve{t}_\emptyset)_\emptyset
  \in \interp{T}_\emptyset$. Furthermore, by lemma~\ref{lemm:operatorsRC},
  we know $\interp{T}_\emptyset$ is a reducibility candidate, and
  therefore $\interp{T}_\emptyset \subseteq \mathsf{SN}_0$. Hence,
  \ve{t} is strongly normalising.\qed 
\end{proof}

\subsubsection{Confluence}\label{sec:conf}

Now confluence follows as a corollary of the strong normalisation theorem.

\begin{corollary}[Confluence]\label{cor:confluenceADD}
 The typed language \CA\ is confluent: for any term $\ve t$, if $\ve t\to^*\ve r$ and $\ve t\to^*\ve u$, then there exists a term $\ve t'$ such that $\ve r\to^*\ve t'$ and $\ve u\to^*\ve t'$.
\end{corollary}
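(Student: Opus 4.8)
The plan is to invoke the classical result that a rewrite system which is both strongly normalising and locally confluent is confluent (Newman's Lemma; see~\cite{Terese03}). We already have strong normalisation for all typable terms of \CA\ from Theorem~\ref{thm:SN}, so the only missing ingredient is local confluence. For this I would appeal to the fact that the underlying term-rewriting system of \CA\ is essentially that of \llin, which has been shown locally confluent~\cite{ArrighiDiazcaroValironDCM11}; the only addition in \CA\ is the type-application rule $(\Lambda X.\ve t)@U\to\ve t[U/X]$, whose critical pairs with the algebraic rules and with $\beta$-reduction are easily checked to be joinable, since type application commutes with all the algebraic context rules and does not interfere with term-level $\beta$-redexes.

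First I would state local confluence of $\to$ on (typable) terms of \CA\ as the key lemma: if $\ve t\to\ve r$ and $\ve t\to\ve u$ then there is $\ve t'$ with $\ve r\to^*\ve t'$ and $\ve u\to^*\ve t'$. The argument is a routine critical-pair analysis: reductions at disjoint positions commute trivially; nested redexes are handled by the substitution behaviour (substitution is linear on sums and commutes with the algebraic rules); and the genuine overlaps are the ones already present in \llin\ (e.g.\ between the factorisation rules of Group~F, or between Group~A distributivity and $\beta$-reduction), all of which close. One subtlety worth mentioning is that reduction is performed modulo associativity and commutativity of $+$, so local confluence must be read as confluence of the rewrite relation on AC-equivalence classes; this is exactly the setting in which \llin's local confluence was established, so it carries over. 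Then, combining this lemma with Theorem~\ref{thm:SN} and Newman's Lemma yields confluence of $\to$ on typable terms, which is the statement of the corollary.

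The main obstacle, such as it is, is not conceptual but bookkeeping: one must make sure that the local-confluence analysis of \llin\ really does transfer verbatim, i.e.\ that the restriction to basis terms in $\beta$-reduction and the extra type-application rule do not create new non-joinable critical pairs. Since the type-application rule operates on the $@U$ construct, which is syntactically disjoint from every left-hand side of the algebraic and $\beta$ rules except through the structural congruence rules (where it simply propagates), no new overlaps arise, and the check is immediate. Hence the corollary follows directly, and I would keep the proof to a couple of lines: cite local confluence, cite Theorem~\ref{thm:SN}, and apply Newman's Lemma.
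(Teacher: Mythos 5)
Your proposal follows essentially the same route as the paper: establish local confluence by transferring the result for the untyped calculus from~\cite{ArrighiDiazcaroValironDCM11} (checking that the additional type-application rule and the $\beta$-rules create no problematic overlaps), then combine it with Theorem~\ref{thm:SN} via Newman's Lemma as in~\cite{Terese03}. Your remarks on the AC-modulo setting and the disjointness of the $@U$ construct are sound and consistent with the paper's (terser) argument.
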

\begin{proof}
  The proof of the local confluence of the system, \ie the property
  saying that $\ve t\to\ve r$ and $\ve t\to\ve u$ imply that there
  exists a term $\ve t'$ such that $\ve r\to^*\ve t'$ and $\ve
  u\to^*\ve t'$, is an extension of the one presented for the untyped
  calculus in \cite{ArrighiDiazcaroValironDCM11}, where the set of
  algebraic rules (\ie all rules but the beta reductions) have
  been proved to be locally confluent using the proof assistant
  Coq. Then, a straightforward induction entails the (local) commutation
  between the algebraic rules and the ${\beta}$-reductions. Finally,
  the confluence of the $\beta$-reductions is a trivial extension of
  the proof for $\lambda$-calculus.
  Local confluence plus strong normalisation (\cf Theorem~\ref{thm:SN}) implies confluence~\cite{Terese03}. \qed
\end{proof}

\section{Abstract Interpretation}\label{sec:abstractinterpretation}
The type system of $\CA$ approximates the more precise types that are
obtained under reduction. The approximation suggests that a
$\lambda$-calculus without scalars can be seen as an abstract
interpretation of \CA: its terms can approximate the terms of \CA.
Scalars can be approximated to their floor, and hence be represented
by sums, just as the types in \CA\ do. This intuition is formalised
in this section, using \additive, the calculus
presented in \cite{DiazcaroPetit10}. This calculus is a typed version
of the additive fragment of \llin~\cite{ArrighiDowekRTA08}, which in
turn is the untyped version of \CA.

The \additive\ calculus is shown in Table~\ref{tab:Additive}. It
features strong normalisation, subject reduction and confluence. For
details on those proofs, please refer to \cite{DiazcaroPetit10}. The
types and equivalences coincide with those from \CA. We write the
types explicitly in the terms to match the presentation of \CA,
although the original presentation is in Curry style.  We use
\vdashadd\ to distinguish the judgements in \CA\ ($\vdash$) from the
judgements in \additive. Also, we write the reductions in \additive\
as \toA, $\ve t\toAnormal$ for the normal form of the term $\ve t$ in
\additive\ and $\ve{t}\tonormal$ for the normal form of \ve{t} in \CA.

\begin{table}
  \caption{The \additive\ calculus. Type syntax, equivalences and type rules
    coincide with those from \CA, except for rule $sI$ which does not exist which is not necessary in this calculus.}
\label{tab:AddLanguage}
\vspace{0.5em}
\hspace{0.5em}
\begin{minipage}[l]{0.98\linewidth}
	$$\begin{array}[t]{l@{\hspace{1.5cm}}r@{~::=~}l}
    \text{\itshape Terms:} & \ve{t},\ve r & \ve{b}~|~\ve{t}~\ve{r}~|~\ve t@U~|~\ve{0}~|~\ve{t}+\ve{r}\\
    \text{\itshape Basis terms:} & \ve{b} & x~|~\lambda x:U.\,\ve{t}~|~\Lambda X.\ve t
    \end{array}$$
    \begin{center}
	\noindent\begin{tabular}{p{4,5cm}p{2,5cm}p{4cm}}  
    \noindent \emph{Group A:}
    
    \noindent $(\ve{u}+\ve{t})~\ve{r}\toA \ve{u}~\ve{r}+\ve{t}~\ve{r}$
    
    \noindent $(\ve{r})~(\ve{u}+\ve{t})\toA\ve{r}~\ve{u}+\ve{r}~\ve{t}$

    \noindent $\ve{0}~\ve{t}\toA \ve{0}$
    
    \noindent $\ve{t}~\ve{0}\toA \ve{0}$

    &
    
    \noindent \emph{Group E:}

    \noindent $\ve{t} + \ve{0}\toA \ve{t}$
    &
    \noindent \emph{$\beta$-reduction:}
    
    \noindent $(\lambda x:U.\,\ve{t})~\ve{b}\toA\ve{t}[\ve b/x]$

	\noindent $(\Lambda X.\ve t)@U\toA\ve t[U/X]$
  \end{tabular}
  \end{center}
  \label{tab:Additive}
\end{minipage}
\vspace{0.5em}
\end{table}

Let $T_c$ be the set of terms in the calculus $c$. Consider the following {\em abstraction} function $\sigma : \TFA \to
 \Tadd$ from terms in \CA\ to terms in \additive:
$$\begin{array}{r@{~=~}l@{\hspace{1cm}}r@{~=~}l}
	\sigma(x:U) & x:U	&		\sigma(\ve t@U) & \sigma(\ve t)@U\\
	\sigma(\lambda x{\type}U.\,\ve t) & \lambda x{\type}U.\,\sigma(\ve t)	&	\sigma(\ve 0) & \ve 0\\
	\sigma(\Lambda X.\ve t) & \Lambda X.\sigma(\ve t)	&	\sigma(\alpha.\ve t) & \sum_{i=1}^{\lfloor\alpha\rfloor} \sigma(\ve t)\\
	\sigma(\ve t~\ve t') & \sigma(\ve t)~\sigma(\ve t')	&	\sigma(\ve t+\ve t') & \sigma(\ve t)+\sigma(\ve t')
  \end{array}$$
where for any term $\ve t$, $\sum_{i=1}^0\ve t=\ve 0$.

We can also define a {\em concretisation} function $\gamma:
\Tadd\to\TFA$, which is the obvious embedding of terms: $\gamma(\ve
t)=\ve t$.

Let $\sqsAstrict\;\subseteq\;\Tadd\times\Tadd$ be the least relation satisfying:
$$\begin{array}{r@{~\Rightarrow~}l@{\hspace{1cm}}r@{~\Rightarrow~}l}
 \multicolumn{4}{c}{\alpha\leq\beta~\Rightarrow~\sum_{i=1}^{\alpha}\ve t\sqsAstrict \sum_{i=1}^{\beta}\ve t} \\
 \ve t\sqsAstrict\ve t' &\lambda x{\type}U.\,\ve t\sqsAstrict\lambda x{\type}U.\,\ve t' &
 \ve t\sqsAstrict\ve t'~\wedge~\ve r\sqsAstrict\ve r' & (\ve t)~\ve r\sqsAstrict (\ve t')~\ve r' \\
 \ve t\sqsAstrict\ve t' &\Lambda X.\ve t\sqsAstrict\Lambda X.\ve t' &
 \ve t\sqsAstrict\ve t'~\wedge~\ve r\sqsAstrict\ve r' & \ve t+\ve r\sqsAstrict\ve t'+\ve r' \\
 \ve t\sqsAstrict\ve t' & \ve t@U\sqsAstrict\ve t'@U &
\ve t\sqsAstrict \ve r\phantom{'}~\wedge~\ve r\sqsAstrict \ve s\phantom{'} & \ve{t}\sqsAstrict \ve{s}
\end{array}$$
and let $\sqsA$ be the relation defined by 
$\ve t_1\sqsA\ve t_2\Leftrightarrow \fnormalA{\ve t_1}\sqsAstrict \fnormalA{\ve t_2}$.

The relation \sqsAstrict\ is a partial order. Also, \sqsA\ is a partial order if we quotient terms by the relation $\sim$, defined by $\ve t\sim\ve r$ if and only if $\fnormal{\ve t}=\fnormal{\ve r}$ . We formalise this in the following lemma.

\begin{lemma}\label{lem:poset}~
\begin{enumerate}
 \item $\sqsAstrict$ is a partial order relation
 \item $\sqsA$ is a partial order relation in $\Tadd/_\sim$. \qed
\end{enumerate}
\end{lemma}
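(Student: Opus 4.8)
The plan is to prove the two parts of Lemma~\ref{lem:poset} separately, both by routine checks that the defining clauses of the relations give reflexivity, transitivity and antisymmetry.

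For part~1, I would first observe that reflexivity of $\sqsAstrict$ follows because each syntactic constructor has a ``monotonicity'' clause and the scalar clause $\alpha\leq\beta\Rightarrow\sum_{i=1}^\alpha\ve t\sqsAstrict\sum_{i=1}^\beta\ve t$ gives $\ve t\sqsAstrict\ve t$ by taking $\alpha=\beta$: a straightforward structural induction on $\ve t$ then yields $\ve t\sqsAstrict\ve t$ for all $\ve t$. Transitivity is immediate, since it is one of the defining clauses (the rule $\ve t\sqsAstrict\ve r\wedge\ve r\sqsAstrict\ve s\Rightarrow\ve t\sqsAstrict\ve s$). The real content is antisymmetry: I would argue by induction on the structure of the derivation that $\ve t\sqsAstrict\ve r$ forces $\ve t$ and $\ve r$ to have the same ``shape'' (same head constructor, and in the case of a sum of $n$ copies of a term, the same block structure), so that $\ve t\sqsAstrict\ve r$ and $\ve r\sqsAstrict\ve t$ together imply each corresponding scalar satisfies $\alpha\leq\beta$ and $\beta\leq\alpha$, whence $\alpha=\beta$, and each corresponding subterm pair satisfies $\sqsAstrict$ in both directions, so equality follows by the induction hypothesis. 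The bookkeeping here must be done carefully because reductions are taken modulo associativity and commutativity of $+$, so ``$\sum_{i=1}^\alpha\ve t$'' should be read up to that equivalence; this is the main obstacle and the place where one has to be precise about what ``shape'' means.

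For part~2, I would lift part~1 through the normal-form map. Since \CA\ is strongly normalising (Theorem~\ref{thm:SN}) and confluent (Corollary~\ref{cor:confluenceADD}), every term $\ve t$ has a unique normal form $\fnormalA{\ve t}$, so the relation $\sqsA$ and the quotient $\Tadd/_\sim$ are well defined. Reflexivity of $\sqsA$ is then $\fnormalA{\ve t}\sqsAstrict\fnormalA{\ve t}$, which holds by part~1; transitivity of $\sqsA$ follows from transitivity of $\sqsAstrict$ applied to the normal forms. For antisymmetry in the quotient, suppose $\ve t_1\sqsA\ve t_2$ and $\ve t_2\sqsA\ve t_1$; then $\fnormalA{\ve t_1}\sqsAstrict\fnormalA{\ve t_2}$ and $\fnormalA{\ve t_2}\sqsAstrict\fnormalA{\ve t_1}$, so by the antisymmetry established in part~1 we get $\fnormalA{\ve t_1}=\fnormalA{\ve t_2}$, which is exactly $\ve t_1\sim\ve t_2$, i.e.\ $\ve t_1$ and $\ve t_2$ are the same element of $\Tadd/_\sim$. (One should double-check that $\sim$ is indeed the kernel of the map $\ve t\mapsto\fnormalA{\ve t}$ as written in the statement — the text uses $\fnormal{\cdot}$ in the definition of $\sim$ but $\fnormalA{\cdot}$ in the definition of $\sqsA$; I would reconcile this notation, presumably both meaning the \additive-normal form, before carrying out the argument.)

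Overall the proof is a pure verification with no deep idea; the only delicate point is the antisymmetry of $\sqsAstrict$ and in particular handling the sums modulo associativity–commutativity, which I would address by fixing a canonical way of writing a normal form as a sum of distinct ``irreducible'' components with multiplicities, and then matching those components and multiplicities across the two derivations $\ve t\sqsAstrict\ve r$ and $\ve r\sqsAstrict\ve t$.
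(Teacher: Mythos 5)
The paper offers no proof of this lemma (it is stated with the proof omitted as routine), so the only question is whether your verification plan actually goes through. The overall architecture is right: reflexivity and transitivity of $\sqsAstrict$ are immediate from the defining clauses, the entire content is antisymmetry, and part~2 is a lift of part~1 through the normal-form map. Your reading of the $\fnormal{\cdot}$ versus $\fnormalA{\cdot}$ notation in the definition of $\sim$ is also the intended one. However, the key step as you describe it is wrong: it is \emph{not} true that $\ve t\sqsAstrict\ve r$ forces $\ve t$ and $\ve r$ to have the same head constructor or the same block structure. The scalar clause with $\alpha=0$ gives $\ve 0\sqsAstrict\ve r$ for \emph{every} term $\ve r$ (since $\sum_{i=1}^{0}\ve t=\ve 0$), and with $\alpha=1,\beta=2$ it gives $x\sqsAstrict x+x$, relating a variable to a sum. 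So the one-directional ``same shape'' invariant you propose to establish by induction on the derivation is false, and the induction collapses at exactly these cases. The repair is to replace shape preservation by a \emph{monotone measure}: define a size $\#(\ve t)$ (number of symbols, invariant under AC of $+$, with $\#(\ve 0)=0$ and $\#(\ve t)>0$ otherwise) and prove by induction on the derivation that (i) $\ve t\sqsAstrict\ve r$ implies $\#(\ve t)\leq\#(\ve r)$, and (ii) $\ve t\sqsAstrict\ve r$ together with $\#(\ve t)=\#(\ve r)$ implies $\ve t=\ve r$. Claim (ii) handles the scalar clause (equal size forces $\alpha=\beta$), the congruence clauses (equal total size plus componentwise $\leq$ forces componentwise equality, then the induction hypothesis applies), and crucially the transitivity clause (from $\#(\ve t)\leq\#(\ve r)\leq\#(\ve s)$ and $\#(\ve t)=\#(\ve s)$ both intermediate inequalities are equalities). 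Antisymmetry then follows from (i) and (ii) without any canonical-decomposition bookkeeping.

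Two smaller points on part~2. First, well-definedness of $\fnormalA{\cdot}$, hence of $\sqsA$ and of $\Tadd/_\sim$, needs strong normalisation and confluence of \additive, which the paper imports from the earlier work on that calculus; Theorem~\ref{thm:SN} and Corollary~\ref{cor:confluenceADD} concern \CA\ and are not the right citations (an \additive\ term need not be typable in \CA\ merely because it is typable in \additive, so one cannot silently transfer those results). Second, once uniqueness of \additive-normal forms is in place, your lifting argument is correct as stated: reflexivity and transitivity of $\sqsA$ are inherited pointwise, and antisymmetry on the quotient is exactly antisymmetry of $\sqsAstrict$ applied to the normal forms.
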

\noindent The following theorem states that the terms in \CA\ can be seen as a refinement of those in \additive, \ie we can consider \additive\ as an abstract interpretation of \CA. It follows by a nontrivial structural induction on $\ve t\in\TFA$.

\begin{theorem}[Abstract interpretation]\label{thm:AbsIntFA-Ad}
The function $\tonormal$ is a valid concretisation of the function $\toAnormal$: $\forall\ve t\in\TFA$, $\fnormalA{\sigma(\ve t)}\sqsA\sigma(\fnormal{\ve t})$. \qed
\end{theorem}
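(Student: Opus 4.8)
The plan is to reduce the statement to a one-step simulation lemma. Unfolding the definition of $\sqsA$, and using that $\fnormalA{\sigma(\ve t)}$ is already an $\ladd$-normal form, the goal $\fnormalA{\sigma(\ve t)}\sqsA\sigma(\fnormal{\ve t})$ is equivalent to $\fnormalA{\sigma(\ve t)}\sqsAstrict\fnormalA{\sigma(\fnormal{\ve t})}$. Since $\ve t$ is strongly normalising (Theorem~\ref{thm:SN}) and $\CA$ is confluent (Corollary~\ref{cor:confluenceADD}), we have $\ve t\to^*\fnormal{\ve t}$; moreover $\sqsAstrict$ is transitive (Lemma~\ref{lem:poset}), so it suffices to propagate the property along a single $\CA$-step and iterate. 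I would therefore establish three lemmas: \textbf{(L1)} $\sigma$ commutes with term and type substitution and maps basis terms to basis terms --- a routine structural induction whose only delicate clause is $\sigma(\alpha.\ve t)=\sum_{i=1}^{\lfloor\alpha\rfloor}\sigma(\ve t)$ together with the linearity of substitution over $+$; \textbf{(L2)} if $\ve t\to\ve t'$ in $\CA$, then $\sigma(\ve t)\toA^{*}\ve s$ and $\ve s\sqsAstrict\sigma(\ve t')$ for some $\ladd$-term $\ve s$; and \textbf{(L3)} $\ve a\sqsAstrict\ve b$ implies $\fnormalA{\ve a}\sqsAstrict\fnormalA{\ve b}$, \ie $\sqsAstrict$ is contained in $\sqsA$. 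Granting these, the theorem follows: for each step $\ve t\to\ve t_1$ of $\ve t\to^*\fnormal{\ve t}$, (L2) gives $\sigma(\ve t)\toA^{*}\ve s\sqsAstrict\sigma(\ve t_1)$, hence $\fnormalA{\sigma(\ve t)}=\fnormalA{\ve s}\sqsAstrict\fnormalA{\sigma(\ve t_1)}$ by (L3); chaining these along the reduction and using transitivity of $\sqsAstrict$ yields $\fnormalA{\sigma(\ve t)}\sqsAstrict\fnormalA{\sigma(\fnormal{\ve t})}$, the equivalent form of the goal. (Since the types and type rules of $\ladd$ are those of $\CA$, $\sigma$ preserves typability, so $\fnormalA{\sigma(\ve t)}$ and all the intermediate $\ladd$-normal forms exist by strong normalisation of $\ladd$.)

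For (L2) I would induct on the derivation of $\ve t\to\ve t'$. In the congruence cases the induction hypothesis supplies the reduct for the sub-step, and the case is closed using the congruence rules that define $\sqsAstrict$ (applied $\lfloor\alpha\rfloor$ times for the scalar context, and together with (L1) when a $\beta$-rule occurs under a context). For the base cases one inspects each rewrite rule of Table~\ref{tab:reduction}. Modulo associativity and commutativity of $+$, all of them but two satisfy the stronger $\sigma(\ve t)\toA^{*}\sigma(\ve t')$: the distributivity rules of Group A map to iterated uses of their $\ladd$-counterparts, the $\beta$-rules to $\ladd$ $\beta$-steps via (L1), and the remaining rules to short $\ladd$-reductions or identities (\eg $0.\ve u$ and $\ve 0$ have the same $\sigma$-image $\ve 0$). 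The exceptions are the two factorisation rules $\alpha.\ve u+\beta.\ve u\to(\alpha+\beta).\ve u$ and $\alpha.(\beta.\ve u)\to(\alpha\times\beta).\ve u$: here $\sigma$ sends the left-hand side to $\sum_{i=1}^{\lfloor\alpha\rfloor+\lfloor\beta\rfloor}\sigma(\ve u)$ (resp.\ $\sum_{i=1}^{\lfloor\alpha\rfloor\cdot\lfloor\beta\rfloor}\sigma(\ve u)$) and the right-hand side to $\sum_{i=1}^{\lfloor\alpha+\beta\rfloor}\sigma(\ve u)$ (resp.\ $\sum_{i=1}^{\lfloor\alpha\times\beta\rfloor}\sigma(\ve u)$), and the arithmetic facts $\lfloor\alpha\rfloor+\lfloor\beta\rfloor\le\lfloor\alpha+\beta\rfloor$ and $\lfloor\alpha\rfloor\cdot\lfloor\beta\rfloor\le\lfloor\alpha\times\beta\rfloor$ give exactly the defining axiom of $\sqsAstrict$. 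This is precisely where the imprecision of the abstraction is produced.

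For (L3) I would argue by well-founded induction on the sum of the lengths of the longest $\ladd$-reductions of $\ve a$ and $\ve b$, which are finite by strong normalisation of $\ladd$~\cite{DiazcaroPetit10}. If $\ve a$ and $\ve b$ are both normal, the claim is immediate. Otherwise, using confluence of $\ladd$ and the induction hypothesis, the case reduces to two simulation properties of $\sqsAstrict$ with respect to $\toA$: a \emph{forward} one --- if $\ve a\sqsAstrict\ve b$ and $\ve a\toA\ve a'$, then $\ve b\toA^{*}\ve b'$ with $\ve a'\sqsAstrict\ve b'$ --- applied when $\ve a$ is reducible, and a \emph{backward} one --- if $\ve a\sqsAstrict\ve b$ and $\ve b\toA\ve b'$, then $\ve a\toA^{*}\ve a'$ with $\ve a'\sqsAstrict\ve b'$ --- applied when $\ve a$ is normal but $\ve b$ is not. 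Both simulations are proved by induction on the derivation of $\ve a\sqsAstrict\ve b$ and use a substitution lemma for $\sqsAstrict$ (if $\ve a\sqsAstrict\ve b$ and $\ve c\sqsAstrict\ve{c'}$ are basis terms, then $\ve a[\ve c/x]\sqsAstrict\ve b[\ve{c'}/x]$, and likewise for type substitution).

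The hard part is (L3), and within it the two simulation lemmas. The difficulty is that when $\ve b$ lies strictly above $\ve a$ the term $\ve b$ may duplicate subterms of $\ve a$: a redex contracted in $\ve a$ may correspond to several redexes in $\ve b$ that must all be contracted, and a $\beta$-redex $(\lambda x : U.\,\ve n)\,\ve c$ of $\ve a$ may occur in $\ve b$ as a \emph{stuck} application $(\lambda x : U.\,\ve{n'})\,(\ve{c'}+\dots+\ve{c'})$ that has to be distributed via Group A before any $\beta$-step fires; dually, the unit rule $\ve u+\ve 0\toA\ve u$ may be enabled on only one side. Moreover, the transitivity clause in the definition of $\sqsAstrict$ makes the induction on $\sqsAstrict$-derivations awkward, so it may be cleanest to first establish a transitivity-free characterisation of $\sqsAstrict$. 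Checking that $\sqsAstrict$ --- whose flexibility comes precisely from its multiplicity axiom $\sum_{i=1}^{\alpha}\ve t\sqsAstrict\sum_{i=1}^{\beta}\ve t$ --- survives all these configurations is an elementary but delicate case analysis, carried out throughout modulo associativity and commutativity of $+$. One should also keep in mind that $\sigma$ does not preserve normal forms (for instance $\sigma(\ve b+(0.5).\ve c)=\sigma(\ve b)+\ve 0$, which still reduces in $\ladd$), which is exactly why both sides of the statement are taken in $\ladd$-normal form.
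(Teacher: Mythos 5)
Your architecture is genuinely different from the paper's: the paper disposes of this theorem in one line, ``by a nontrivial structural induction on $\ve t\in\TFA$'', whereas you induct on the reduction sequence $\ve t\to^*\fnormal{\ve t}$ and factor everything through a one-step simulation lemma (L2) and the closure property (L3), i.e.\ $\sqsAstrict\,\subseteq\,\sqsA$. The reorganisation is sensible, and (L1) and (L2) are essentially right; in particular you correctly isolate the two factorisation rules and the inequalities $\lfloor\alpha\rfloor+\lfloor\beta\rfloor\le\lfloor\alpha+\beta\rfloor$ and $\lfloor\alpha\rfloor\cdot\lfloor\beta\rfloor\le\lfloor\alpha\times\beta\rfloor$ as the only sources of strict imprecision.

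The gap is (L3), which is false as stated. Take $\ve a=x+\ve 0$ and $\ve b=x+y$ with $x\neq y$: since $\ve 0=\sum_{i=1}^{0}y\sqsAstrict\sum_{i=1}^{1}y=y$ by the multiplicity axiom, the congruence rule for $+$ gives $\ve a\sqsAstrict\ve b$; yet $\fnormalA{\ve a}=x$ and $\fnormalA{\ve b}=x+y$, and $x\sqsAstrict x+y$ is not derivable --- every term $\sqsAstrict$-above $x$ is, up to associativity and commutativity, a sum of copies of $x$, since no rule lets a fresh summand appear beside a non-$\ve 0$ term. Concretely, your forward simulation breaks at the Group~E step $x+\ve 0\toA x$: no reduct of $x+y$ lies $\sqsAstrict$-above $x$. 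This is not a corner case you can quarantine, because (L2) manufactures exactly this configuration: for $\alpha=\beta=0.5$ the factorisation case yields $\ve 0\sqsAstrict\sigma(\ve u)$, and placing that redex under a sum, e.g.\ $\ve t=x+(0.5).y+(0.5).y$ with $\fnormal{\ve t}=x+y$, makes the instance of (L3) that your chain needs ($\fnormalA{\ve 0 + x}=x\sqsAstrict x+y$) precisely the false one. Indeed, unless $\sqsAstrict$ is meant to be read modulo $\ve t+\ve 0\equiv\ve t$, or to contain a clause of the form $\ve t\sqsAstrict\ve t+\ve r$, this example appears to contradict the theorem itself, and a structural induction on $\ve t$ hits the same wall at the $+$ case; so before completing the argument you must either exhibit a derivation of $x\sqsAstrict x+y$ or identify the missing clause in the definition of $\sqsAstrict$ and redo (L3) --- and the statement of the theorem --- relative to it.
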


The following lemma states that the abstraction preserves the typings.

\begin{lemma}\label{lem:typingcorrectnessFA-Ad}
For arbitrary context $\Gamma$, term $\ve t$ and type $T$, if
$\Gamma\vdash\ve t\type T$ then $\Gamma\vdashadd \sigma(\ve t)\type T$. \qed
\end{lemma}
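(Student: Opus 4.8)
The plan is to proceed by structural induction on the derivation of $\Gamma\vdash\ve t\type T$ in \CA, showing in each case that $\Gamma\vdashadd\sigma(\ve t)\type T$ is derivable with the typing rules of \additive. Since the two calculi share the same types, the same type equivalence, and (essentially) the same typing rules, the single exception being rule \textsc{sI}, most cases are immediate. For the axioms \textsc{Ax} and \textsc{Ax}$_{\bar 0}$ the term is unchanged by $\sigma$ and the same axiom applies; for \textsc{Eq} the term is unchanged, so the conclusion follows from the induction hypothesis together with the same equivalence $T\equiv R$; and for $\to_I$, $\to_E$, $\forall_I$, $\forall_E$ and \textsc{+I} the function $\sigma$ commutes with the corresponding term former ($\sigma(\lambda x{\type}U.\,\ve t)=\lambda x{\type}U.\,\sigma(\ve t)$, $\sigma(\ve t~\ve r)=\sigma(\ve t)~\sigma(\ve r)$, $\sigma(\Lambda X.\ve t)=\Lambda X.\sigma(\ve t)$, $\sigma(\ve t@U)=\sigma(\ve t)@U$, $\sigma(\ve t+\ve r)=\sigma(\ve t)+\sigma(\ve r)$), the side conditions such as $X\notin\mathrm{FV}(\Gamma)$ are untouched, and the context $\Gamma$ is preserved; so one simply re-applies the identically shaped rule of \additive to the translated premises.

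The only case requiring work is rule \textsc{sI}, where $\Gamma\vdash\alpha.\ve t\type\lfloor\alpha\rfloor.T$ is obtained from $\Gamma\vdash\ve t\type T$. Here $\sigma(\alpha.\ve t)=\sum_{i=1}^{\lfloor\alpha\rfloor}\sigma(\ve t)$, so, writing $n=\lfloor\alpha\rfloor$ and using the induction hypothesis $\Gamma\vdashadd\sigma(\ve t)\type T$, I must show $\Gamma\vdashadd\sum_{i=1}^{n}\sigma(\ve t)\type n.T$. This follows by an auxiliary induction on $n$: when $n=0$ the term is $\ve 0$ and $n.T=\bar 0$, so rule \textsc{Ax}$_{\bar 0}$ applies; for the step, $\sum_{i=1}^{n+1}\sigma(\ve t)=\bigl(\sum_{i=1}^{n}\sigma(\ve t)\bigr)+\sigma(\ve t)$, the left summand has type $n.T$ by the auxiliary hypothesis and the right summand type $T$ by the main hypothesis, so rule \textsc{+I} yields type $n.T+T$, which is exactly $(n+1).T$ (up to the associativity clause of $\equiv$, absorbed via rule \textsc{Eq} if the $n.T$ notation is spelled out strictly).

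Two minor points to keep straight: the scalars appearing inside \CA types (e.g. in rule $\to_E$, the $\alpha$ in $\sum_{i=1}^\alpha(U\to T_i)$ and the $\beta$ in $\beta.U$) are natural numbers, so there is no type-level scalar to translate and the $\to_E$ case is genuinely immediate; and the action of $\sigma$ on open terms and on the context is the identity on variables, so no freshness or capture issue arises. I expect the \textsc{sI} case, and specifically the bookkeeping that identifies $\sum_{i=1}^{\lfloor\alpha\rfloor}\sigma(\ve t)$ with an iterated \textsc{+I} derivation of type $\lfloor\alpha\rfloor.T$, including the $\lfloor\alpha\rfloor=0$ boundary, to be the main and essentially only obstacle; everything else is a routine rule-by-rule transcription.
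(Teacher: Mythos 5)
Your proof is correct and is exactly the routine rule-by-rule induction the paper has in mind (it omits the proof entirely, marking the lemma with \qed as immediate); your treatment of the one genuinely non-transcription case, \textsc{sI}, via an auxiliary induction on $\lfloor\alpha\rfloor$ using \textsc{Ax}$_{\bar 0}$ for the empty sum and iterated \textsc{+I} otherwise, is precisely what is needed.
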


Taking \additive\ as an abstract interpretation of \CA\ entails the
extension of the interpretation of \additive\ into System F with pairs, 
\Fp\ (\cf~\cite{DiazcaroPetit10}) as an abstract interpretation of \CA, as depicted in Figure~\ref{fig:AI}.
The complete language \Fp\ is defined in Table~\ref{tab:Fp}. We denote by $\fnormalFp{t}$ the normal form of a term $t$ in \Fp. The relation \sqsFp\ is a straightforward translation of the relation $\sqsA$ into a relation in \Fp. The function \AddtoFpD{\cdot}\ is the translation from typed terms in \additive\ into terms in \Fp; this translation depends on the typing derivation $\mathsf{D}$ of the term in \additive\ (\cf \cite{DiazcaroPetit10} for more details).
We formalise this in Theorem~\ref{thm:AbsIntFA-Fp} and also give the formal definition of the relation \sqsFp\ in definition \ref{def:relFp}.

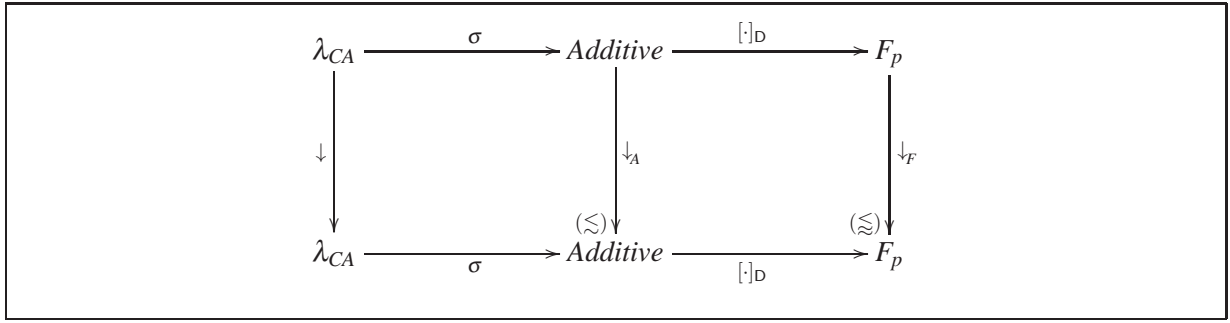
\begin{figure}[t]
  $$\xymatrix@C=15ex@R=12ex{
    \CA
    \ar@{->}[d]_{\tonormal}
    \ar@{->}[r]^{\sigma}
    & 
    \additive
    \ar@{->}[d]^{\:\toAnormal}_(0.85){(\sqsA)}
	\ar@{->}[r]^{\AddtoFpD{\cdot}}
	& 
	\Fp
	\ar@{->}[d]^{\toFpnormal}_(0.85){(\sqsFp)}
    \\
    \CA
    \ar@{->}[r]_{\sigma}
    &
    \additive
	\ar@{->}[r]_{\AddtoFpD{\cdot}}
	& 
	\Fp
  }$$
\caption{Abstract interpretation of \CA\ into System F with pairs}
 \label{fig:AI}

\end{figure}

\begin{table}[ht]
\caption{System F with pairs}
 \label{tab:Fp}
\vspace{0.5em}
\hspace{0.5em}
\begin{minipage}[l]{0.98\linewidth}
	$$\begin{array}[t]{l@{\hspace{1.5cm}}r@{~::=~}l}
      \text{\itshape Terms:} & t,u & x~|~\lambda x.\,t~|~tu~|~\star~|~\pair t u~|~\pi_1(t)~|~\pi_2(t)\\
      \text{\itshape Types:} & A,B & X~|~A \toF B~|~\forall X.A~|~\uno~|~A\times B\\
    \end{array}$$
	$$\begin{array}{c@{\qquad;\qquad}c}
	(\lambda x.\,t)u\toF t[u/x] &\pi_i\pair{t_1}{t_2}\toF t_i 
	  \end{array}$$
  \begin{mathpar}
    \inferrule{ }{\Delta,x:A\thesi x:A}{\scriptstyle Ax}
    \and
    \inferrule{ }{\Delta\thesi \star:\uno}{\scriptstyle \uno}
    \and
    \inferrule{\Delta,x:A\thesi t:B}{\Delta\thesi \lambda x.\,t:A\toF B}
    {\scriptstyle \toF I}
    \and
    \inferrule{\Delta\thesi t:A\toF B \and \Delta\thesi u:A}{\Delta\thesi tu:B}
    {\scriptstyle \toF E}
    \and
    \inferrule{\Delta\thesi t:A\and \Delta\thesi u:B}{\Delta\thesi\pair{t}{u}:A\times B}
    {\scriptstyle \times  I}
    \and
    \inferrule{\Delta\thesi t:A\times B}{\Delta\thesi\pi_1(t):A}{\scriptstyle\times  E_{\ell}}
    \and
    \inferrule{\Delta\thesi t:A\times B}{\Delta\thesi\pi_2(t):B}{\scriptstyle\times  E_r}
    \\
    \inferrule{\Delta\thesi t:A\and X\notin F\!V(\Delta)}{\Delta\thesi t:\forall X.A}
    {\scriptstyle \forall I}
    \and
    \inferrule{\Delta\thesi t:\forall X.A}{\Delta\thesi t:A[B/X]}{\scriptstyle \forall E}
  \end{mathpar}
\end{minipage}
\end{table}

\begin{definition}\label{def:relFp}
  Let $\sqsFpstrict\subseteq \TFp\times\TFp$ be the least relation
  between terms of \Fp\ satisfying:
$$\begin{array}{r@{~\Rightarrow~}lcr@{~\Rightarrow~}l}
 \multicolumn{2}{c}{	\star\sqsFpstrict  t} & t\sqsFpstrict t
	& \multicolumn{2}{c}{ t\sqsFpstrict( t, t)} \\
  t\sqsFpstrict t'~\wedge~ r\sqsFpstrict r' & ( t, r)\sqsFpstrict( t', r') &
	&  t\sqsFpstrict t' &\lambda x.\, t\sqsFpstrict\lambda x.\, t' \\
  t\sqsFpstrict t'~\wedge~ r\sqsFpstrict r' &  t~ r\sqsFpstrict  t'~ r' &
	&  t\sqsFpstrict t' & \pi_1( t)\sqsFpstrict\pi_1( t') \\
  t\sqsFpstrict r\phantom{'}~\wedge~ r\sqsFpstrict  s\phantom{'} & {t}\sqsFpstrict {s} &
	&  t\sqsFpstrict t' & \pi_2( t)\sqsFpstrict\pi_2( t')
\end{array}
$$
and let $\sqsFp$ be the relation defined by $ t_1\sqsFp t_2\Leftrightarrow \fnormalA{ t_1}\sqsFpstrict \fnormalA{ t_2}$.
\end{definition}

The relation $\sqsFpstrict$ is a partial order. Moreover \sqsFp\ is a 
partial order if we quotient terms in $\Fp$ by the equivalence relation
$\approx$, defined as: $t\approx r$ if and only if $\fnormalFp{t}=\fnormalFp{r}$.
\begin{lemma}\label{lem:Fppartialorder}~
\begin{enumerate}
 \item $\sqsFpstrict$ is a partial order relation.
 \item \sqsFp\ is a partial order relation over  $\TFp/_\approx$. \qed
\end{enumerate}
\end{lemma}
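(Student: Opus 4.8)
The plan is to prove both items of Lemma~\ref{lem:Fppartialorder} by the same strategy used for Lemmas~\ref{lem:poset} and~\ref{lem:Fppartialorder}'s analogue in \additive, adapting everything to the term-and-pair syntax of \Fp. For item 1, I would show that $\sqsFpstrict$ is reflexive, transitive and antisymmetric. Reflexivity is immediate from the rule $t\sqsFpstrict t$ (and follows anyway from the congruence rules by structural induction). Transitivity is built directly into the definition via the rule $t\sqsFpstrict r \wedge r\sqsFpstrict s \Rightarrow t\sqsFpstrict s$. Antisymmetry is the substantive part: I would argue by induction on the structure of the derivations that if $t\sqsFpstrict t'$ and $t'\sqsFpstrict t$ then $t=t'$. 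The key observation is that each generating rule of $\sqsFpstrict$ either preserves the ``shape'' of the term (the congruence rules for $\lambda$, application, pairing, projections) or strictly adds structure (the rules $\star\sqsFpstrict t$ and $t\sqsFpstrict(t,t)$, which introduce a $\star$ or a duplicating pair). One then checks that a strictly-adding step cannot be undone: if $t\sqsFpstrict t'$ uses a strictly-adding rule somewhere, then $t'$ contains strictly ``more'' occurrences of $\star$ or of pairing nodes than $t$ in a sense that is monotone along $\sqsFpstrict$, so $t'\sqsFpstrict t$ is impossible unless the derivations use only congruence and reflexivity, in which case $t=t'$ by a routine induction.

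For item 2, I would use item 1 together with the definition $t_1\sqsFp t_2 \Leftrightarrow \fnormalFp{t_1}\sqsFpstrict\fnormalFp{t_2}$. Reflexivity of $\sqsFp$ on $\TFp/_\approx$ follows since $\fnormalFp{t}\sqsFpstrict\fnormalFp{t}$. Transitivity follows from transitivity of $\sqsFpstrict$ applied to normal forms. Antisymmetry modulo $\approx$ is exactly the statement that $\fnormalFp{t_1}\sqsFpstrict\fnormalFp{t_2}$ and $\fnormalFp{t_2}\sqsFpstrict\fnormalFp{t_1}$ imply $\fnormalFp{t_1}=\fnormalFp{t_2}$, i.e. $t_1\approx t_2$; this is precisely antisymmetry of $\sqsFpstrict$ from item 1. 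For this to be well-posed I also need that $\fnormalFp{\cdot}$ is well-defined, which relies on strong normalisation and confluence of \Fp\ --- standard facts about System F with pairs, so I would simply cite them.

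The main obstacle is antisymmetry of $\sqsFpstrict$, and specifically finding the right monotone measure. A naive count of symbols does not work because the congruence rules (e.g. $t\sqsFpstrict t' \Rightarrow tr\sqsFpstrict t'r$) combined with strictly-adding rules inside subterms interact; and the rule $t\sqsFpstrict(t,t)$ duplicates, so a simple node count is not obviously preserved in the right direction under the transitivity rule. The clean fix is to define, for each pair of related terms, a well-founded ``growth'' measure --- for instance, the multiset of positions at which $t'$ has a pairing or $\star$ node that is not ``matched'' by a corresponding node in $t$ along the relation --- and show it is non-decreasing along every $\sqsFpstrict$ step and strictly increasing whenever a non-congruence, non-reflexivity rule is applied; then $t\sqsFpstrict t'\sqsFpstrict t$ forces that measure to be constant, hence only congruence/reflexivity rules are used in a cycle, and a straightforward structural induction concludes $t=t'$. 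I expect this bookkeeping to be the only delicate point; everything else is routine, which is presumably why the paper states the lemma with a \qed and defers the details.
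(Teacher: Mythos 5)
The paper offers no proof of this lemma at all (it is stated with a closing \qed and the details are left to the reader, as with Lemma~\ref{lem:poset}), so there is no official argument to compare yours against; I can only assess your proposal on its own terms. Your overall decomposition is the right one: reflexivity and transitivity of $\sqsFpstrict$ are built into the generating rules, item~2 reduces cleanly to item~1 via normal forms (modulo citing strong normalisation and confluence of \Fp\ so that $\fnormalFp{t}$ is well defined and $\sqsFp$ descends to $\TFp/_\approx$), and antisymmetry of $\sqsFpstrict$ is indeed the only substantive point.

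However, your antisymmetry argument has a concrete flaw at the rule $\star\sqsFpstrict t$. You describe it as a rule that ``introduces a $\star$'' and propose a measure counting unmatched $\star$ and pairing nodes that is supposed to strictly increase on every non-congruence, non-reflexivity step. The instance $\star\sqsFpstrict x$ refutes this: the right-hand side contains no $\star$ and no pair, so your measure does not increase (a $\star$ is in fact \emph{removed}), yet the step is neither a congruence nor an instance of reflexivity and does not satisfy $t=t'$. Hence the inference ``measure constant along a cycle, therefore only congruence/reflexivity rules were used, therefore $t=t'$'' does not go through as stated. The lemma is still true, and the repair is small: either (a) prove separately, by rule induction, the inversion fact that $t\sqsFpstrict\star$ implies $t=\star$ (so the $\star$-rule can never be reversed and hence never occurs nontrivially in a cycle), and pair it with a plain node count, which \emph{is} non-decreasing on every rule and strictly increasing on $t\sqsFpstrict(t,t)$; or (b) use a lexicographic measure such as (number of nodes, number of non-$\star$ leaves), which is non-decreasing on every generating rule and constant only when the derived pair is syntactically equal, which you then verify by a routine induction through the congruence and transitivity rules. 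Either way the bookkeeping you anticipated is real, but it must be organised around the observation that $\star\sqsFpstrict t$ trades a $\star$ for an arbitrary term rather than adding one.
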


\noindent In \cite[Thm. 3.8]{DiazcaroPetit10} it is shown that the translation
\AddtoFpD{\cdot}\ is well behaved. So it will trivially keep the
order.

\begin{lemma}\label{lem:translationToFpKeepsOrder}
  Let $\mathsf{D}$ be a derivation tree ending in $\Gamma\vdashadd\ve
  t: T$ and $\mathsf{D'}$ be a derivation tree corresponding to
  $\Gamma\vdashadd\ve r: R$, where $\ve t\sqsA\ve r$. Then
  $\AddtoFpD{\ve t}\sqsFp\AddtoFp{\ve r}{D'}$. \qed
\end{lemma}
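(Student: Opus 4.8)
The claim is that the translation $\AddtoFpD{\cdot}$ is monotone with respect to the orders $\sqsA$ on $\Tadd$ and $\sqsFp$ on $\TFp$. The strategy is to reduce everything to the already-established good behaviour of $\AddtoFpD{\cdot}$ from \cite[Thm.~3.8]{DiazcaroPetit10}, together with the definitions of $\sqsA$ and $\sqsFp$ as ``order on normal forms''. Recall $\ve t\sqsA\ve r$ means $\fnormalA{\ve t}\sqsAstrict\fnormalA{\ve r}$, and similarly $t_1\sqsFp t_2$ means $\fnormalFp{t_1}\sqsFpstrict\fnormalFp{t_2}$. So what we really need are two ingredients: first, that $\AddtoFpD{\cdot}$ commutes with normalisation (up to $\approx$), i.e. $\fnormalFp{\AddtoFpD{\ve t}}$ depends only on $\fnormalA{\ve t}$, and second, that $\AddtoFpD{\cdot}$ sends the strict order $\sqsAstrict$ into the strict order $\sqsFpstrict$.

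First I would invoke the simulation property from \cite[Thm.~3.8]{DiazcaroPetit10}: if $\ve t\toA\ve t'$ then $\AddtoFpD{\ve t}\toF^*\AddtoFp{\ve t'}{D''}$ for the induced derivation $\mathsf{D''}$, and conversely reductions in $\Fp$ are reflected back, so that $\fnormalFp{\AddtoFpD{\ve t}}$ is (up to $\approx$, and up to the choice of derivation) the image of $\fnormalA{\ve t}$ under the translation of the normal-form derivation. This lets me replace $\ve t$ by $\fnormalA{\ve t}$ and $\ve r$ by $\fnormalA{\ve r}$ at the cost of only changing the $\Fp$-side by $\approx$, which is harmless since $\sqsFp$ is defined on $\TFp/_\approx$ (Lemma~\ref{lem:Fppartialorder}). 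Hence it suffices to prove: if $\ve t\sqsAstrict\ve r$ (both now in normal form) then $\AddtoFpD{\ve t}\sqsFpstrict\AddtoFp{\ve r}{D'}$.

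The second ingredient I would prove by induction on the derivation of $\ve t\sqsAstrict\ve r$, following the inductive clauses defining $\sqsAstrict$. The base clause $\sum_{i=1}^\alpha\ve s\sqsAstrict\sum_{i=1}^\beta\ve s$ with $\alpha\le\beta$ translates, under $\AddtoFpD{\cdot}$, to nested pairs/tuples encoding $\alpha$ (resp. $\beta$) copies; the corresponding $\sqsFpstrict$-facts are generated precisely by the clauses $t\sqsFpstrict(t,t)$ and $\star\sqsFpstrict t$ (padding the extra $\beta-\alpha$ slots with $\star$, which is how $\AddtoFpD{\cdot}$ encodes the missing summands — here I lean on the explicit shape of the translation in \cite{DiazcaroPetit10}). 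The congruence clauses ($\lambda$, application, $+$, $\Lambda$, $@U$) each map to a matching $\sqsFpstrict$ congruence clause ($\lambda$, application, pairing, and — for the System F features that translate to identities or projections — the $\pi_i$ and transitivity clauses), so the inductive step is a routine matching of constructors; transitivity of $\sqsAstrict$ maps to transitivity of $\sqsFpstrict$.

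The main obstacle is the dependence of $\AddtoFpD{\cdot}$ on the derivation $\mathsf{D}$: when $\ve t\sqsAstrict\ve r$, the derivations $\mathsf{D}$ and $\mathsf{D'}$ need not be related in an obvious syntactic way, and the base case in particular requires knowing that enlarging a sum $\sum_{i=1}^\alpha\ve s$ to $\sum_{i=1}^\beta\ve s$ corresponds, on the type side $\alpha.S\sqsAstrict\beta.S$, to a translation that genuinely pads with $\star:\uno$ rather than doing something derivation-specific. Handling this cleanly means I would first record (or cite from \cite{DiazcaroPetit10}) a normalisation-of-derivations statement: any two derivations of the same judgement yield translations equal up to $\approx$, and the translation of a ``canonical'' derivation of $\sum_{i=1}^\beta\ve s : \beta.S$ is the $\beta$-fold right-nested pair of the translation of the canonical derivation of $\ve s:S$. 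With that in hand the induction goes through; without it, one is forced to chase derivation-specific coercions, which is the delicate part.
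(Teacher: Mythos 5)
Your plan is correct and takes essentially the same route as the paper, which in fact offers no written proof at all: it simply asserts that since the translation $\AddtoFpD{\cdot}$ is well behaved by \cite[Thm.~3.8]{DiazcaroPetit10}, it ``will trivially keep the order.'' Your two-ingredient decomposition (commutation of the translation with normalisation up to $\approx$, then a clause-by-clause induction sending $\sqsubseteq$ to $\sqsubseteq_F$, with sums becoming nested pairs padded by $\star$) is exactly what that ``trivially'' must contain, and you correctly flag the one genuinely delicate point the paper glosses over, namely the dependence of the translation on the typing derivation and the need for a coherence statement identifying translations of different derivations up to $\approx$.
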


\begin{theorem}\label{thm:AbsIntFA-Fp} The function $\tonormal$ is a
  valid concretisation of $\toFpnormal$: $\forall\ve t\in\TFA$ if
  $\mathsf{D}$ is a derivation of $\Gamma\vdash\sigma(\ve t):T$ and
  $\mathsf{D'}$ is the derivation of $\Gamma\vdash\sigma(\fnormal{\ve
    t}):T'$, then $\fnormalFp{\AddtoFpD{\sigma(\ve t)}}
  \sqsFp\AddtoFp{\sigma(\fnormal{\ve t})}{D'}$.
\end{theorem}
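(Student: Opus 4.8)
The plan is to chain together the two abstract-interpretation results already established. The diagram in Figure~\ref{fig:AI} expresses exactly what we want: the left square is Theorem~\ref{thm:AbsIntFA-Ad}, the right square is the content of \cite[Thm. 3.8]{DiazcaroPetit10} packaged as Lemma~\ref{lem:translationToFpKeepsOrder}, and the present theorem is the outer rectangle. So the proof is essentially a pasting argument, with the relation $\sqsFp$ doing the bookkeeping. First I would fix $\ve t\in\TFA$, apply $\sigma$ to get a term in \additive, and invoke Lemma~\ref{lem:typingcorrectnessFA-Ad} to obtain a typing derivation $\mathsf{D}$ of $\Gamma\vdashadd\sigma(\ve t):T$; similarly, since strong normalisation (Theorem~\ref{thm:SN}) gives us $\fnormal{\ve t}$, we get a derivation $\mathsf{D'}$ of $\Gamma\vdashadd\sigma(\fnormal{\ve t}):T'$ (subject reduction up to $\pce$, Theorem~\ref{thm:SR}, guarantees $\sigma(\fnormal{\ve t})$ is typable, possibly at a more precise type $T'$).

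Next I would instantiate Theorem~\ref{thm:AbsIntFA-Ad} at $\ve t$, which yields $\fnormalA{\sigma(\ve t)}\sqsA\sigma(\fnormal{\ve t})$ in \additive. Because $\sigma(\fnormal{\ve t})$ is already an \additive-normal form (here I would note that $\sigma$ commutes appropriately with normalisation, or simply that $\fnormal{\ve t}$ being \CA-normal means $\sigma(\fnormal{\ve t})$ has no remaining $\beta$- or Group-A redexes, so $\fnormalA{\sigma(\fnormal{\ve t})}=\sigma(\fnormal{\ve t})$), the relation $\sqsA$ between these two terms is witnessed on their \additive-normal forms, which is precisely what $\sqsA$ is defined to compare. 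Then I apply Lemma~\ref{lem:translationToFpKeepsOrder} to the pair $\fnormalA{\sigma(\ve t)}\sqsA\sigma(\fnormal{\ve t})$ with derivations $\mathsf{D}$ and $\mathsf{D'}$: this transports the order across the translation, giving $\AddtoFpD{\sigma(\ve t)}\sqsFp\AddtoFp{\sigma(\fnormal{\ve t})}{D'}$. Since $\sqsFp$ is defined via normal forms in \Fp\ (Definition~\ref{def:relFp}), and since $\fnormalFp{\AddtoFp{\sigma(\fnormal{\ve t})}{D'}}=\AddtoFp{\sigma(\fnormal{\ve t})}{D'}$ up to $\approx$ using that the translation of a normal \additive-term is a normal \Fp-term (this is part of the well-behavedness in \cite{DiazcaroPetit10}), we get $\fnormalFp{\AddtoFpD{\sigma(\ve t)}}\sqsFp\AddtoFp{\sigma(\fnormal{\ve t})}{D'}$, which is the claim.

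The main obstacle I anticipate is not the pasting itself but the \emph{coherence of the translations with normalisation and typing derivations}. The translation $\AddtoFpD{\cdot}$ depends on the typing derivation $\mathsf{D}$, not just on the term, so one must be careful that the derivations $\mathsf{D}$ and $\mathsf{D'}$ used on the top and bottom of the right square are genuinely compatible with Lemma~\ref{lem:translationToFpKeepsOrder}, whose hypothesis requires a derivation of the \sqsA-smaller term and a derivation of the \sqsA-larger term. Concretely, I would need $\mathsf{D}$ to be (or to be replaceable by) a derivation of $\Gamma\vdashadd\fnormalA{\sigma(\ve t)}:T$ rather than of $\sigma(\ve t)$ itself — i.e. I should first normalise in \additive\ and appeal to subject reduction in \additive\ to transport the derivation along $\fnormalA{\cdot}$, using that \Fp-normalisation of $\AddtoFpD{\sigma(\ve t)}$ and of $\AddtoFp{\fnormalA{\sigma(\ve t)}}{D}$ agree (again by well-behavedness of the translation). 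Once that reconciliation is spelled out, the remaining steps are the routine structural checks hinted at by the ``nontrivial structural induction'' already absorbed into Theorem~\ref{thm:AbsIntFA-Ad}, so the present theorem reduces to assembling these cited facts. \qed
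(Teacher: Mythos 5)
Your proposal takes essentially the same approach as the paper, whose proof is precisely this pasting argument: Theorem~\ref{thm:AbsIntFA-Ad} for the left square of Figure~\ref{fig:AI}, Lemma~\ref{lem:typingcorrectnessFA-Ad} for preservation of typing, and Lemma~\ref{lem:translationToFpKeepsOrder} for the right square. The additional care you take about derivation-dependence is sensible but already absorbed by the cited lemmas, and your parenthetical claim that $\sigma(\fnormal{\ve t})$ is \additive-normal can fail (e.g.\ $\sigma$ may introduce Group~E redexes such as $\ve t+\ve 0$ when $\lfloor\alpha\rfloor=0$), though this is harmless because $\sqsA$ compares \additive-normal forms by definition.
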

\begin{proof}
  Theorem \ref{thm:AbsIntFA-Ad} states that the left
  square in Figure~\ref{fig:AI} commutes,
  lemma~\ref{lem:typingcorrectnessFA-Ad} states that the typing is
  preserved by this translation, and finally
  lemma~\ref{lem:translationToFpKeepsOrder} states that
  the square on the right commutes.\qed
\end{proof}

\section{Summary of Contributions}\label{sec:conclusion}
We have presented a confluent, typed, strongly normalising, algebraic
$\lambda$-calculus, based on \llin, which has an algebraic rewrite
system without restrictions. Typing guarantees confluence, thereby
allowing us to simplify the rewrite rules for the system with respect
to $\llin$. Moreover, $\CA$ differs from $\lalg$ in that it presents
vectors in a canonical form by using a rewrite system instead of an
equational theory.

In this work, scalars are approximated by natural numbers.  This
approximation yields a subject reduction property which is exact about
the types involved in a term, but only approximate in their ``amount''
or ``weight''. In addition, the approximation is a lower bound: if a
term has a type that is a sum of some amount of different types, then
after reducing it these amounts can be incremented but never
decremented.

One of the original motivations for this work was to ensure confluence
in the presence of algebraic rewrite rules, while remaining
``classic'', in the sense that the type system does not introduce
uninterpretable elements, \ie elements that cannot have an exact interpretation in a classical system, such as scalars. To prove that we have
achieved this goal, we have shown that terms in \additive, the
additive fragment of \llin, can be seen as an abstract interpretation
of terms in \CA, and then System F can also be used as an abstract
interpretation of terms in \CA\ by the translation from \additive\
into \Fp.

In our calculus, we have chosen to take the floor of the scalars to
approximate types. However, this decision is arbitrary, and we could
have chosen to approximate types using the ceiling instead. Therefore,
an obvious extension of this system is to take both floor and ceiling
of scalars to produce type intervals, thus obtaining more accurate
approximations.

An interesting suggestion made for one of the reviewers is to use truth values instead of natural numbers, which although will loose precision in the interpretation (indeed, it would be as interpreting all non-zero values by $1$) could make the interpretation into a classical system much more direct.

Since this paper is meant as a ``proof of concept'' we have not worked
around a known restriction in \additive, which allows sums as
arguments only when all their constituent terms have the same
type, \eg $\ve t~(\ve r+\ve s)$ cannot have a type unless $\ve r$ and $\ve s$ have the same type. However, it has been proved that this can be solved by using a
more sophisticated arrow elimination typing
rule~\cite{ArrighiDiazcaroValironDCM11}.

Since the type system derives from System F, there are some total
functions which cannot be represented in \CA, even though they are
expressible in \llin. This is not a problem in practice because these
functions are quite hard to find, so it is a small price to pay for
having a simpler, confluent rewrite system.

It is still an open question how to obtain a similar result for a
calculus where scalars are members of an arbitrary ring.

\section*{Acknowledgements}
We would like to thank Pablo Arrighi, Philippe Jorrand, Simon Perdrix,
Barbara Petit, and Beno\^it Valiron for enlightening discussions.
This work was supported by grants from DIGITEO and R\'egion \^Ile-de-France, and also by the CNRS--INS2I PEPS project QuAND.

\bibliographystyle{eptcs}
\bibliography{biblio-full}

\end{document}